\title{Statistical inference for max-stable processes in space and time}
\author{Richard A. Davis
\thanks{Department of Statistics, Columbia University, New York, United States,\newline http://www.stat.columbia.edu/, Email: rdavis@stat.columbia.edu} 
\and Claudia Kl\"uppelberg
\thanks{Center for Mathematical Sciences, Technische Universit\"at M\"unchen,  
D-85748 Garching, Germany, http://www-m4.ma.tum.de, Email: cklu@ma.tum.de}
\and Christina Steinkohl
\thanks{Center for Mathematical Sciences and Institute for Advanced Study, Technische Universit\"at M\"unchen,\qquad \qquad \ \  D-85748 Garching, Germany, http://www-m4.ma.tum.de, Email: steinkohl@ma.tum.de}}
\newtheorem{theorem}{Theorem}[section]
\newtheorem{lemma}[theorem]{Lemma}
\newtheorem{proposition}[theorem]{Proposition}
\newtheorem{definition}[theorem]{Definition}
\newtheorem{corollary}[theorem]{Corollary}
\newtheorem{assumption}[theorem]{Assumption}
\newtheorem{condition}[theorem]{Condition}
\newtheorem{fig}[theorem]{Figure}
\theoremstyle{definition}
\newtheorem{example}[theorem]{Example}
\newtheorem{remark}[theorem]{Remark}
\newcommand{\bthe}{\begin{theorem}}
\newcommand{\ethe}{\end{theorem}}
\newcommand{\ben}{\begin{enumerate}}
\newcommand{\een}{\end{enumerate}}
\newcommand{\beq}{\begin{equation}}
\newcommand{\eeq}{\end{equation}}
\newcommand{\ble}{\begin{lemma}}
\newcommand{\ele}{\end{lemma}}
\newcommand{\bde}{\begin{definition}}
\newcommand{\ede}{\end{definition}}
\newcommand{\bco}{\begin{corollary}}
\newcommand{\eco}{\end{corollary}}
\newcommand{\bpr}{\begin{proposition}}
\newcommand{\epr}{\end{proposition}}
\newcommand{\bproof}{\begin{proof}}
\newcommand{\eproof}{\end{proof}}
\newcommand{\bexam}{\begin{example}}
\newcommand{\eexam}{\end{example}}
\newcommand{\bfi}{\begin{fig}}
\newcommand{\efi}{\end{fig}}
\newcommand{\btab}{\begin{tab}}
\newcommand{\etab}{\end{tab}}
\newcommand{\beao}{\begin{eqnarray*}}
\newcommand{\eeao}{\end{eqnarray*}\noindent}
\newcommand{\beam}{\begin{eqnarray}}
\newcommand{\eeam}{\end{eqnarray}\noindent}
\newcommand{\barr}{\begin{array}}
\newcommand{\earr}{\end{array}}
\newcommand{\bdis}{\begin{displaymath}}
\newcommand{\edis}{\end{displaymath}\noindent}
\newcommand{\bs}{\boldsymbol}
\newcommand{\bbn}{\mathbb{N}}
\newcommand{\bbz}{\mathbb{Z}}
\newcommand{\bbr}{\mathbb{R}}
\newcommand{\coloneqq}{\mathrel{\mathop:}=}
\newcommand{\eqqcolon}{=\mathrel{\mathop:}}
\DeclareMathOperator*{\argmax}{arg\,max}
\begin{document}

\maketitle

\begin{abstract}
Max-stable processes have proved to be useful for the statistical
modelling of spatial extremes. Several representations of max-stable
random fields have been proposed in the literature. 
One such representation is based on a limit of normalized
and scaled pointwise maxima of stationary Gaussian processes that was first
introduced by Kabluchko, Schlather and de Haan \cite{Schlather2}.

This paper deals with statistical inference for max-stable space-time processes that are defined in an analogous fashion. We describe pairwise likelihood estimation, where the pairwise density of the process is used to estimate the model parameters and prove strong consistency and asymptotic normality of the parameter estimates for an increasing space-time dimension, i.e., as the joint number of spatial locations and time points tends to infinity. 
A simulation study shows that the proposed method works well for these models. 
\end{abstract}
\vfill

\noindent
\begin{tabbing}
{\em AMS 2010 Subject Classifications:} \= primary:\,\,\, \,  \ 60G70  \\
\> secondary: \,\,\,62F12, 62M10, 62M40
\end{tabbing}

\vspace{1cm}

\noindent
{\em Keywords:}
Max-stable space-time process, pairwise likelihood estimation, strong consistency, asymptotic normality
\vspace{0.5cm}

\section{Introduction}
\label{Introduction}
Max-stable processes have proven to be useful in the modelling of spatial extremes. 
Typically, meteorological extremes like heavy rainfall or extreme wind speeds are modelled using extreme value theory. In particular, sample maxima such as annual maximum wind speeds are observed at several locations of some spatial process. Other applications may involve the analysis of image data resulting from tomographic examinations. 

Several representations of max-stable processes have been proposed in the literature, including for example Brown and Resnick \cite{Brown}, de Haan \cite{deHaan}, Kabluchko, Schlather and de Haan \cite{Schlather2}, and Schlather \cite{Schlather1}. 
Recently, models for extreme values observed in a space-time setting have generated a great deal of interest. First approaches can be found in Davis and Mikosch \cite{Davis}, Huser and Davison \cite{Huser}, Kabluchko \cite{Kabluchko1}, and Davis, Kl\"uppelberg and Steinkohl \cite{Steinkohl}. 

In this paper, we follow the approach described in Davis et al. \cite{Steinkohl}, who extend the max-stable process introduced in Kabluchko et al. \cite{Schlather2} to a space-time setting. The process is constructed as the limit of rescaled and normalized maxima of independent replications of some stationary Gaussian space-time process. The underlying correlation function of the Gaussian process is assumed to belong to a parametric model whose parameters describe smoothness of the correlation function near the origin. 

As it is well-known for max-stable processes, the full likelihood function is computationally intractable and other methods have to be used to derive parameter estimates. Standard procedures for such cases are composite likelihood including pairwise likelihood estimation. These methods go back to Besag \cite{Besag}, and there is an extensive literature available dealing with applications and properties of the estimates, see for example Cox and Reid \cite{Cox}, Lindsay \cite{Lindsay}, Varin \cite{Varin},  or Varin and Vidoni \cite{Varin2}.   
Recent work concerning the application of pairwise likelihood methods to max-stable random fields can be found in Huser and Davison \cite{Huser} and Padoan, Ribatet and Sisson \cite{Ribatet}. 

Since the observations in a space-time setting are correlated in space and time, we use special properties of max-stable processes to show strong consistency and asymptotic normality of the estimates. Here, it is assumed that the locations lie on a regular lattice and that the time points are equidistant. The spatial and/or the temporal dimension, i.e., the number of spatial locations and/or time points, increases to infinity.  
The main step of the proof is based on a strong law of large numbers for the pairwise likelihood function. Stoev \cite{Stoev2} analyzed ergodic properties for max-stable processes in time resulting from extremal integral representations for max-stable processes that were introduced in Stoev and Taqqu \cite{Stoev1}. The extension to a spatial setting and the resulting strong law of large numbers was shown by Wang, Roy and Stoev \cite{Wang}. By combining these two results we obtain a strong law of large numbers for a jointly increasing space-time domain. 

In addition to strong consistency, we prove asymptotic normality for the pairwise likelihood estimates. A first result concerning asymptotic normality of pairwise likelihood estimates for max-stable space-time processes can be found Huser and Davison \cite{Huser}, who fix the number of locations and let the number of time points tend to infinity. We formulate asymptotic normality for the space-time setting and use Bolthausen's theorem \cite{Bolthausen} together with strong mixing properties shown by Dombry and Eyi-Minko \cite{Dombry} to prove asymptotic normality for an increasing number of space-time locations. 

Our paper is organized as follows. In Section \ref{Desmodel}, we introduce the max-stable space-time process for which inference properties will be considered in subsequent sections. Section \ref{PLmodel} describes pairwise likelihood estimation and the particular setting for our model. In Section \ref{Consistency1} we establish strong consistency for the estimates for increasing space-time domain. Asymptotic normality of these parameters is established in Section \ref{Asymptoticnormality}. A simulation study evaluating the performance of the estimates is described in Section \ref{Simulation}. 

\section{Description of the model}\label{Desmodel}
We start with the process that will be used for modelling extremes in space and time; details can be found in Davis et al. \cite{Steinkohl}. 
Let $\left\{Z(\bs{s},t), \bs{s}\in \bbr^d, t\in [0,\infty)\right\}$ denote a stationary space-time Gaussian process on $\bbr^d\times [0,\infty)$ with mean zero and variance one.
With the correlation function
$$\rho(\bs{h},u) = \mathbb{E}\left[Z(\bs{s},t)Z(\bs{s}+\bs{h},t+u)\right],$$
where $\bs{h}\in\bbr^d$ is the spatial lag and $u\in\bbr$ is the time lag,   
we make the following assumption that will be used throughout the paper. 
\begin{assumption}\label{ass1}
There exist sequences of constants $s_n \to 0$, $t_n \to 0$ as $n\to \infty$, such that
\begin{equation*}
\log n(1-\rho(s_n \bs{h},t_n u)) \to \delta(\bs{h},u) >0, \ \text{ as } n\to \infty.
\end{equation*}
\end{assumption}
Assumption \ref{ass1} is natural in the context of stationary space-time models; the correlation function tends to one at a certain rate as the space-time lag approaches the zero.
\begin{proposition}[Kabluchko et al. \cite{Schlather2} and Davis et al. \cite{Steinkohl}]\label{BrownResnick}
Let $\left\{Z_j(\bs{s},t),\bs{s}\in\bbr^d,t\in[0,\infty)\right\}, j=1,\ldots,n$, be independent replications of the space-time Gaussian process described above and let $\left\{\xi_j, j\in \bbn\right\}$ denote points of a Poisson random measure on $[0,\infty)$ with intensity measure $\xi^{-2}d\xi$. 
Suppose Assumption \ref{ass1} is satisfied. Then, the random fields $\left\{\eta_n(\bs{s},t)), \bs{s}\in \bbr^d, t\in [0,\infty)\right\}$, defined for $n\in \bbn$ by
\begin{equation}
\eta_n(\bs{s},t) = \bigvee\limits_{j=1}^n -\frac{1}{\log(\Phi(Z_j(s_n\bs{s},t_nt)))}, \ \bs{s}\in \bbr^d, t\in [0,\infty),
\label{model}
\end{equation}
converge weakly on the space of continous functions on $\bbr^d\times[0,\infty)$ to the stationary Brown-Resnick process 
\begin{equation}
\eta(\bs{s},t) = \bigvee\limits_{j=1}^{\infty}\xi_j \exp\left\{W_j(\bs{s},t) - \delta(\bs{s},t)\right\},
\label{limitfield}
\end{equation}
where the deterministic function $\delta$ is given in Assumption \ref{ass1} and $\left\{W_j(\bs{s},t), \bs{s}\in \bbr^d,t\in[0,\infty)\right\}$, $j\in \bbn$ are independent replications of a Gaussian process with stationary increments, $W(\bs{0},0) = 0$, $\mathbb{E}(W(\bs{s},t)) = 0$ and covariance function for $\bs{s}_1,\bs{s}_2 \in \bbr^d, t_1,t_2 \in [0,\infty)$
$$\mathbb{C}ov\left(W(\bs{s}_1,t_1),W(\bs{s}_2,t_2)\right) = \delta(\bs{s}_1,t_1) + \delta(\bs{s}_2,t_2) - \delta(\bs{s}_1-\bs{s}_2,t_1-t_2).$$ 
The bivariate distribution function of $\eta$ can be expressed in closed form and is based on a well-known result by H\"usler and Reiss \cite{Huesler};
\begin{equation}\label{bivhuesler}
F(x_1,x_2) =\exp\left\{-\frac{1}{x_1}\Phi\left(\frac{\log\frac{x_2}{x_1}}{2\sqrt{\delta(\bs{h},u)}} + \sqrt{\delta(\bs{h},u)}\right) - \frac{1}{x_2}\Phi\left(\frac{\log\frac{x_1}{x_2}}{2\sqrt{\delta(\bs{h},u)}}+ \sqrt{\delta(\bs{h},u)}\right)\right\},
\end{equation}
where $\Phi$ denotes the distribution function of a standard normal distribution. 
\end{proposition}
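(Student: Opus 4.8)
The plan is to deduce the statement from the spatial limit theorem of Kabluchko et al.\ \cite{Schlather2}, checking that its proof carries over once the single spatial rescaling is replaced by the two-scale rescaling $(s_n,t_n)$ of Assumption \ref{ass1} and $\bbr^d\times[0,\infty)$ is treated as a $(d+1)$-dimensional index set; this is the route of Davis et al.\ \cite{Steinkohl}. The weak convergence $\eta_n\Rightarrow\eta$ on $C(\bbr^d\times[0,\infty))$ splits, as usual, into convergence of the finite-dimensional distributions and tightness, and the closed form (\ref{bivhuesler}) follows either as the two-point special case or directly from a Gaussian computation with the extremal representation (\ref{limitfield}).

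\emph{Finite-dimensional distributions.} Fix space-time points $(\bs s_1,t_1),\dots,(\bs s_k,t_k)$ and write $\eta_n$ as the pointwise maximum of $n$ i.i.d.\ copies of the field $\zeta_n$ built from $Z$ as in (\ref{model}), normalized so that $\eta_n$ has unit Fr\'echet one-dimensional margins. By independence of the copies, $\mathbb{P}(\eta_n(\bs s_i,t_i)\le x_i,\ i\le k)=(1-p_n)^n$ with $p_n=\mathbb{P}\big(\zeta_n(\bs s_i,t_i)>x_i\ \text{for some }i\le k\big)$, so it suffices to show $n\,p_n\to\Theta(x_1,\dots,x_k)\coloneqq\mathbb{E}\big[\max_{i\le k}x_i^{-1}\exp\{W(\bs s_i,t_i)-\delta(\bs s_i,t_i)\}\big]$; then $(1-p_n)^n\to e^{-\Theta}$, which is the finite-dimensional distribution of (\ref{limitfield}). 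More structurally, one shows that the empirical point process with atoms at the normalized fields $\zeta_{n,1},\dots,\zeta_{n,n}$ converges vaguely, on $C(\bbr^d\times[0,\infty))$, to the Poisson random measure whose mean measure is the intensity of $\{\xi_j\exp(W_j(\cdot)-\delta(\cdot))\}$, and then applies the continuous map ``take the pointwise supremum''. Identifying this mean measure is where Assumption \ref{ass1} enters: Mills-ratio asymptotics for the Gaussian tail, together with the $\log n$ scaling, show that an exceedance of $Z(s_n\bs s_i,t_nt_i)$ at one site forces a log-normal profile $\exp\{W(\cdot)-\delta(\cdot)\}$ at the remaining sites, the covariance of $W$ being read off from $\log n\,(1-\rho(s_n(\bs s_i-\bs s_j),t_n(t_i-t_j)))\to\delta(\bs s_i-\bs s_j,t_i-t_j)$, while multivariate normal-comparison (Slepian-type) inequalities control the joint behaviour uniformly over $i,j\le k$. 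This is the technical core and reproduces the computation of \cite{Schlather2} with $(s_n,t_n)$ in place of a single scale.

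\emph{Tightness.} On a compact $K\subset\bbr^d\times[0,\infty)$ the marginals of $\eta_n$ are heavy-tailed, so a Kolmogorov--Chentsov-type criterion is applied to a bounded monotone transform of $\eta_n$ (for instance $\eta_n/(1+\eta_n)$, or equivalently to $1/\eta_n$, a minimum of continuous fields). The required increment bound over a space-time displacement of size $r$ reduces to one of order $r^{\,d+1+\eps}$, which follows from the smoothness of $\rho$ near the origin forced by Assumption \ref{ass1} and the fact that the rescaling drives the relevant lags to zero; since only the $(d+1)$-dimensional metric structure of the index set is used, the argument is unchanged from \cite{Schlather2} (compare \cite{Steinkohl}), and implicitly it uses a.s.\ continuity of the paths of $Z$.

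\emph{Bivariate law.} The two-point distribution of $\eta$ from (\ref{limitfield}) is $\exp\{-V(x_1,x_2)\}$ with $V(x_1,x_2)=\mathbb{E}\big[\max\big(x_1^{-1}\exp\{W(\bs s_1,t_1)-\delta(\bs s_1,t_1)\},\,x_2^{-1}\exp\{W(\bs s_2,t_2)-\delta(\bs s_2,t_2)\}\big)\big]$. Splitting the maximum according to which term is larger and using the elementary identity $\mathbb{E}[e^X\mathbf{1}_{\{Y>0\}}]=e^{\mathbb{E}X+\frac12\var(X)}\,\Phi\big((\mathbb{E}Y+\cov(X,Y))/\sqrt{\var(Y)}\big)$ for jointly normal $(X,Y)$, with $X=W(\bs s_1,t_1)-\delta(\bs s_1,t_1)-\log x_1$ and $Y$ the log-ratio of the two terms, yields the closed form; stationarity of the increments of $W$ makes $\var(W(\bs s_1,t_1)-W(\bs s_2,t_2))$ depend only on $\bs h=\bs s_1-\bs s_2$ and $u=t_1-t_2$ and expresses it through $\delta(\bs h,u)$, and simplification gives exactly (\ref{bivhuesler}) --- the original H\"usler--Reiss computation \cite{Huesler}, which is also the two-point case of the finite-dimensional step. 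I expect the point-process/finite-dimensional step to be the main obstacle, the delicate point being the uniform control, over the finite site set, of the convergence of the rescaled Gaussian exceedances to the log-normal spectral shape in (\ref{limitfield}); tightness and the bivariate formula are routine given Assumption \ref{ass1}.
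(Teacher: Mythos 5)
First, a structural remark: the paper gives no proof of this proposition --- it is imported wholesale from Kabluchko et al.\ \cite{Schlather2} and Davis et al.\ \cite{Steinkohl} --- so there is no internal argument to compare yours against. Your outline (finite-dimensional convergence via the point process of rescaled Gaussian exceedances, tightness via a Kolmogorov--Chentsov criterion applied to a bounded transform, and the bivariate law via the Gaussian change-of-measure identity) is the standard route and is sound at the level of strategy; in particular the identity $\mathbb{E}[e^X\mathbf{1}_{\{Y>0\}}]=e^{\mathbb{E}X+\frac12\var(X)}\Phi\bigl((\mathbb{E}Y+\cov(X,Y))/\sqrt{\var(Y)}\bigr)$ is exactly the right tool for the two-point law.

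The genuine gap is that your final claim, that this computation ``gives exactly \eqref{bivhuesler}'', is asserted rather than checked, and if you carry it through with the covariance function as stated it fails by a factor of $\sqrt2$. Writing $\delta_i=\delta(\bs s_i,t_i)$ and $\delta_{12}=\delta(\bs s_1-\bs s_2,t_1-t_2)$, the stated covariance gives $\var\bigl(W(\bs s_1,t_1)-W(\bs s_2,t_2)\bigr)=2\delta_{12}$, and your identity then yields the exponent $\frac{1}{x_1}\Phi\bigl(\sqrt{\delta_{12}/2}+\log(x_2/x_1)/\sqrt{2\delta_{12}}\bigr)+\frac{1}{x_2}\Phi\bigl(\sqrt{\delta_{12}/2}+\log(x_1/x_2)/\sqrt{2\delta_{12}}\bigr)$, i.e.\ a H\"usler--Reiss law with parameter $\sqrt{\delta_{12}/2}$, whereas \eqref{bivhuesler} --- which is what the direct H\"usler--Reiss limit of the Gaussian array under Assumption \ref{ass1} produces, and what \eqref{chi} and Lemma \ref{bivdenlemma} rely on --- has parameter $\sqrt{\delta_{12}}$. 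Reconciling the two requires $\var\bigl(W(\bs s_1,t_1)-W(\bs s_2,t_2)\bigr)=4\delta_{12}$; a proof must confront this normalization (arguably a slip in the statement's covariance formula) rather than declare agreement. Two further, smaller gaps: the tightness increment bound ``of order $r^{d+1+\eps}$'' is not derived, and Assumption \ref{ass1} alone (pointwise convergence along one scaling sequence) does not supply the local uniformity and regularity of $\rho$ near the origin that such a bound needs --- something like Condition \ref{ass2}, or a locally uniform strengthening of Assumption \ref{ass1}, has to be invoked; and \eqref{model} as printed omits the $1/n$ normalization that your finite-dimensional computation $(1-p_n)^n\to e^{-\Theta}$ silently inserts.
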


Many correlation functions satisfy the following condition, which will be used throughout. 
\begin{condition}\label{ass2}
The correlation function has an expansion around zero, given by 
\begin{equation*}
\rho(\bs{h},u) = 1-\theta_1\|\bs{h}\|^{\alpha_1}-\theta_2|u|^{\alpha_2} + O(\|\bs{h}\|^{\alpha_1}|u|^{\alpha_2}), \bs{h}\in\bbr^d, u\in \bbr,
\end{equation*}
where $0<\alpha_1,\alpha_2\leq 2$ and $\theta_1,\theta_2 >0$. 
\end{condition}
Condition \ref{ass2} allows for an explicit expression of the limit function $\delta$ in Assumption \ref{ass1},
\begin{equation}\label{delta}
\delta(\bs{h},u) = \theta_1\|\bs{h}\|^{\alpha_1} + \theta_2|u|^{\alpha_2},
\end{equation} 
where the scaling sequences $(s_n)$ and $(t_n)$ can be chosen as 
$s_n = (\log n)^{1/\alpha_1}$ and $t_n = (\log n)^{1/\alpha_2}$. 
The parameters $\alpha_1,\alpha_2 \in (0,2]$ relate to the smoothness of the underlying Gaussian process in space and time, where the case $\alpha_1=\alpha_2=2$ corresponds to a mean-square differentiable process. 
For example, Gneiting's class of correlation functions \cite{Gneiting} satisfies Condition \ref{ass2}. For a detailed analysis of Gneiting's class and further examples we refer to Davis et al. \cite{Steinkohl}, Proposition 4.5, where the expansion around zero is calculated for several classes of correlation functions. 
A further property of the model defined in Proposition \ref{BrownResnick} is the closed form expression for the \emph{tail dependence coefficient}, which is defined by 
$$\chi(\bs{h},u) = \lim\limits_{x\to\infty}P\left(\eta(\bs{s}_1,t_1)>F^{\leftarrow}_{\eta(\bs{s}_1,t_1)}(x)\mid\eta(\bs{s}_2,t_2)>F^{\leftarrow}_{\eta(\bs{s}_2,t_2)}(x)\right),$$
where $\bs{h}=\bs{s}_1-\bs{s}_2$ denotes the spatial distance between two locations and $u=t_1-t_2$ is the temporal lag. 
As derived in Section 3 in Davis et al. \cite{Steinkohl}, we obtain
\begin{equation}
\chi(\bs{h},u) = 2\left(1-\Phi(\sqrt{\delta(\bs{h},u)})\right) = 2\left(1-\Phi(\sqrt{\theta_1\|\bs{h}\|^{\alpha_1} + \theta_2|u|^{\alpha_2}})\right).
\label{chi}
\end{equation}

\section{Pairwise likelihood estimation}\label{Plmodel}
In this section, we describe the pairwise likelihood estimation for the parameters of the model in \eqref{model} introduced in Section \ref{Desmodel}. 
Composite likelihood methods have been used, whenever the full likelihood is not available or intractable. We present the general definition of composite and pairwise likelihood functions in Section \ref{comp}. Afterwards, we describe the details for our model.
\subsection{Basics on composite likelihood estimation}\label{comp}
Composite likelihood methods go back to Besag \cite{Besag} and Lindsay \cite{Lindsay} and there is vast literature available, from a theoretical and an applied point of view. 
For more information we refer to Varin \cite{Varin} who gives an overview of existing models and inference including extensive references.  
In the most general setting, the composite log-likelihood function  
is given by
\begin{equation*}
l_c(\bs{\psi},\bs{x}) = \sum\limits_{i=1}^q w_i \log f(\bs{x}\in A_i; \bs{\psi}).
\end{equation*}
From this general form, special composite likelihood functions 
can be derived. For our setting we define the \emph{(weighted) pairwise log-likelihood function} by 
\begin{equation}
PL(\bs{\psi};\bs{x}) = \sum\limits_{i=1}^{n}\sum\limits_{j=1}^n w_{i,j} \log f_{\bs{\psi}}(x_i,x_j),
\label{PLgendef}
\end{equation}
where $\bs{x} = (x_1,\ldots,x_n)$ is the data vector, $f_{\bs{\psi}}(x_i,x_j)$ is the density for the bivariate observations $(x_i,x_j)$ and $w_{i,j}$ are weights which can be used for example to reduce the number of pairs included in the estimation. 
The parameter estimates are obtained by maximizing \eqref{PLgendef}. 

As noted in Cox and Reid \cite{Cox}, for dependent observations, estimates based on the composite likelihood need not be consistent or asymptotically normal. 
This is important for space-time applications, since all components may be highly dependent across space and time. 

\subsection{Application to spatio-temporal max-stable random fields}\label{PLmodel}
To derive pairwise-likelihood functions for the model defined in Proposition \eqref{model} we first need to derive the bivariate density function for the space-time max-stable process. For later purposes we state the closed form expression in the following lemma. Throughout we denote by $\Phi$ and $\varphi$ the cumulative distribution function and the density of the standard normal distribution, respectively. For simplicity we suppress the argument $(x_1,x_2)$.  
\begin{lemma}\label{bivdenlemma}
Set $\delta \coloneqq  \delta(\bs{h},u)$ as given in \eqref{delta} and define for $x_1,x_2 >0$
\begin{align}
q_{\bs{\psi}}^{(1)} &\coloneqq \frac{\log (x_2/x_1)}{2\sqrt{\delta}} + \sqrt{\delta} \qquad 
q_{\bs{\psi}}^{(2)}\coloneqq \frac{\log (x_1/x_2)}{2\sqrt{\delta}} + \sqrt{\delta}, \label{q1q2}\\
V &\coloneqq \frac{1}{x_1}\Phi(q_{\bs{\psi}}^{(1)})+\frac{1}{x_2}\Phi(q_{\bs{\psi}}^{(2)}). \label{V}
\end{align} 
The partial derivatives of $q_{\bs{\psi}}^{(1)}$ and $q_{\bs{\psi}}^{(2)}$ are given by
\begin{align*}
\frac{\partial q_{\bs{\psi}}^{(1)}}{\partial x_1} &= -\frac{1}{2\sqrt{\delta}x_1},  \quad 
\frac{\partial q_{\bs{\psi}}^{(1)}}{\partial x_2} = \frac{1}{2\sqrt{\delta}x_2}, \quad  \frac{\partial q_{\bs{\psi}}^{(2)}}{\partial x_2} = -\frac{1}{2\sqrt{\delta}x_2},  \quad 
\frac{\partial q_{\bs{\psi}}^{(2)}}{\partial x_1} = \frac{1}{2\sqrt{\delta}x_1}.\\\end{align*}
The first and second order partial derivatives of $V$ are given by
\begin{align*}
\frac{\partial V}{\partial x_1} &= -\frac{1}{x_1^2}\Phi(q_{\bs{\psi}}^{(1)})- \frac{1}{2\sqrt{\delta}x_1^2} \varphi(q_{\bs{\psi}}^{(1)}) + \frac{1}{2\sqrt{\delta}x_1x_2} \varphi(q_{\bs{\psi}}^{(2)}), \\
\frac{\partial V}{\partial x_2} &= -\frac{1}{x_2^2} \Phi(q_{\bs{\psi}}^{(2)}) - \frac{1}{2\sqrt{\delta}x_2^2} \varphi(q_{\bs{\psi}}^{(2)}) + \frac{1}{2\sqrt{\delta}x_1x_2}\varphi(q_{\bs{\psi}}^{(1)}),\\
\frac{\partial^2 V}{\partial x_1 \partial x_2}
&= -\frac{2\sqrt{\delta}-q_{\bs{\psi}}^{(1)}}{4\delta x_1^2x_2} \varphi(q_{\bs{\psi}}^{(1)}) - \frac{2\sqrt{\delta} - q_{\bs{\psi}}^{(2)}}{4\delta x_1x_2^2} \varphi(q_{\bs{\psi}}^{(2)}).
\end{align*}  
Finally, the \emph{bivariate log-density} is
\begin{equation}
\log f(x_1,x_2) = -V + \log\left[\left(\frac{\partial V}{\partial x_1}\right)\left(\frac{\partial V}{\partial x_2}\right) - \frac{\partial^2 V}{\partial x_1 \partial x_2}\right].
\label{bivden}
\end{equation}
\end{lemma}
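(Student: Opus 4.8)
The bivariate density is the mixed second partial derivative of the bivariate distribution function, so the plan is simply to differentiate the closed form for $F$ supplied by Proposition~\ref{BrownResnick}. By \eqref{bivhuesler} and the definition \eqref{V}, $F(x_1,x_2)=\exp\{-V\}$. Writing $\partial_i$ for $\partial/\partial x_i$, one has $\partial_1 F=-e^{-V}\,\partial_1 V$ and hence
\begin{equation*}
\frac{\partial^2 F}{\partial x_1\partial x_2}=e^{-V}\Bigl[(\partial_1 V)(\partial_2 V)-\partial_1\partial_2 V\Bigr],
\end{equation*}
so that $\log f(x_1,x_2)=-V+\log\bigl[(\partial_1 V)(\partial_2 V)-\partial_1\partial_2 V\bigr]$, which is \eqref{bivden}. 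The bracket is strictly positive because $F$ is a genuine bivariate distribution function by Proposition~\ref{BrownResnick}, so the logarithm is well defined. It remains to compute the first and second order partial derivatives of $V$.

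First I would differentiate $q_{\bs{\psi}}^{(1)}=\log(x_2/x_1)/(2\sqrt\delta)+\sqrt\delta$ and $q_{\bs{\psi}}^{(2)}=\log(x_1/x_2)/(2\sqrt\delta)+\sqrt\delta$ directly in $x_1$ and $x_2$; since $\delta=\delta(\bs h,u)$ does not depend on $(x_1,x_2)$, this produces the four expressions in the statement at once. Next, applying the product rule to $x_i^{-1}\Phi(q_{\bs{\psi}}^{(i)})$ together with the chain rule and $\Phi'=\varphi$ yields $\partial_1 V$ and $\partial_2 V$; the computation for $\partial_2 V$ is the same as that for $\partial_1 V$ after interchanging the roles of $(x_1,q_{\bs{\psi}}^{(1)})$ and $(x_2,q_{\bs{\psi}}^{(2)})$, so only one of the two needs to be carried out.

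For the mixed derivative I would differentiate the expression just obtained for $\partial_1 V$ with respect to $x_2$, now also using $\varphi'(t)=-t\,\varphi(t)$. Each of the three summands of $\partial_1 V$ contributes terms proportional to $\varphi(q_{\bs{\psi}}^{(1)})$ or to $\varphi(q_{\bs{\psi}}^{(2)})$; collecting them, the coefficient of $x_1^{-2}x_2^{-1}\varphi(q_{\bs{\psi}}^{(1)})$ reduces to $-(2\sqrt\delta)^{-1}+q_{\bs{\psi}}^{(1)}/(4\delta)=(q_{\bs{\psi}}^{(1)}-2\sqrt\delta)/(4\delta)$, and symmetrically for the $\varphi(q_{\bs{\psi}}^{(2)})$ part, which is exactly the claimed formula for $\partial^2 V/\partial x_1\partial x_2$. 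It is also convenient to record the identities $q_{\bs{\psi}}^{(1)}+q_{\bs{\psi}}^{(2)}=2\sqrt\delta$ and $(q_{\bs{\psi}}^{(1)})^2-(q_{\bs{\psi}}^{(2)})^2=2\log(x_2/x_1)$, the latter giving $x_2\,\varphi(q_{\bs{\psi}}^{(1)})=x_1\,\varphi(q_{\bs{\psi}}^{(2)})$; these make the symmetry of $\log f$ in $(x_1,x_2)$ transparent and streamline the higher-order derivative computations needed later in the paper.

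In short, the proof is an application of the product and chain rules plus the elementary facts $\Phi'=\varphi$ and $\varphi'(t)=-t\varphi(t)$. I do not expect a genuine obstacle; the only point requiring care is the bookkeeping in the second derivative, namely keeping the $\varphi(q_{\bs{\psi}}^{(1)})$- and $\varphi(q_{\bs{\psi}}^{(2)})$-terms separate and combining their coefficients (polynomials in $1/\sqrt\delta$) correctly into the compact form stated.
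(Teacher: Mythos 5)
Your proposal is correct and follows exactly the route the paper intends: the lemma is a direct computation from the H\"usler--Reiss bivariate distribution function \eqref{bivhuesler}, writing $F=e^{-V}$ and taking the mixed second partial derivative, and all of your coefficient bookkeeping (in particular the reduction of the $\varphi(q_{\bs{\psi}}^{(1)})$-coefficient to $(q_{\bs{\psi}}^{(1)}-2\sqrt{\delta})/(4\delta)$) checks out. The paper gives no separate proof, treating the lemma as precisely this routine differentiation, so there is nothing further to compare.
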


The resulting parameter vector is  $\bs{\psi}=(\theta_1,\alpha_1,\theta_2,\alpha_2)$.
We first define the pairwise likelihood for a general setting with $M$ locations $\bs{s}_1,\ldots,\bs{s}_M$ and $T$ time points $0\leq t_1<\cdots <t_T<\infty$. In a second step we assume that the locations lie on a regular grid and that the time points are equidistant. 

\begin{align}
& PL^{(M,T)}(\bs{\psi}) =  \sum\limits_{i=1}^{M}\sum\limits_{j=i+1}^M\sum\limits_{k=1}^{T-1}\sum\limits_{l=k+1}^{T} w^{(M)}_{i,j}w^{(T)}_{k,l}\log f_{\bs{\psi}}(\eta(\bs{s_i},t_k),\eta(\bs{s}_j,t_l)),
\label{PLfull0}
\end{align}
where $w^{(M)}_{i,j}\geq 0$ and $w^{(T)}_{k,l}\geq 0$ denote spatial and temporal weights, respectively. Since it is expected that space-time pairs, which are far apart in space or in time, have only little influence on the dependence parameters to be estimated, we define the weights, such that in the estimation only pairs with a maximal spatio-temporal distance of $(r,p)$ are included, i.e.,  
\begin{equation}
w^{(M)}_{i,j} = \mathds{1}_{\left\{\|\bs{s}_i-\bs{s}_j\| \leq r\right\}}, \qquad w^{(T)}_{k,l} = \mathds{1}_{\left\{|t_k-t_l| \leq p\right\}},
\label{weights}
\end{equation}
where $\|\cdot\|$ denotes any arbitrary norm on $\bbr^d$. The pairwise likelihood estimates are given by 
\begin{equation}
(\hat{\theta}_1,\hat{\alpha}_1,\hat{\theta}_2,\hat{\alpha}_2) = \argmax\limits_{(\theta_1,\alpha_1,\theta_2,\alpha_2)} PL^{(M,T)}(\theta_1,\alpha_1,\theta_2,\alpha_2). 
\label{PLfull}
\end{equation}
Using the definition of the weights in \eqref{weights}, the log-likelihood function in \eqref{PLfull0} can be rewritten as 
\begin{align*}
& PL^{(M,T)}(\bs{\psi}) =  \sum\limits_{i=1}^{M}\sum\limits_{\stackrel{j=i+1}{\|\bs{s}_i-\bs{s}_j\|\leq r}}^{M}\sum\limits_{k=1}^{T-p}\sum\limits_{l=k+1}^{\min\left\{k+p,T\right\}} \log f_{\bs{\psi}}(\eta(\bs{s}_i,t_k),\eta(\bs{s}_j,t_l)). 
\end{align*}

The following sampling scheme is assumed throughout.  
\begin{condition}\label{grid}
We assume that the locations lie on a regular $d$-dimensional lattice, 
$$S = \big{\{}\bs{s}_j, j=1,\ldots,m^d\big{\}} = \big{\{}\bs{s}_{(i_1,\ldots,i_d)}=(i_1,\ldots,i_d),\ i_1,\ldots,i_d \in \left\{1,\ldots,m\right\}\big{\}}.$$
Further assume that the time points are equidistant,
$$0 = t_0\leq t_1<\cdots < t_T<\infty, \ \ |t_k-t_{k-1}|=1,\  k=1,\ldots, T.$$
\end{condition}

For later purposes, we rewrite the pairwise log-likelihood function under Condition \ref{grid} in the following way. 
Define $\mathcal{H}_r$ as the set of all vectors with non-negative integer-valued components $\bs{h}$ without the $\bs{0}$-vector, which point to other sites in the set of locations within distance $r$. Nott and Ryden \cite{Nott} call this the \emph{design mask}. We denote by $|\mathcal{H}_r|$ the cardinality of the set $\mathcal{H}_r$. In our application, we will use the following design masks according to the Euclidean distance; 
\begin{align*}
\mathcal{H}_1 &= \left\{(1,0),(0,1)\right\}\\
\mathcal{H}_2 &= \mathcal{H}_1 \cup \left\{(1,1),(0,2),(2,0)\right\} \\
\mathcal{H}_3 &= \mathcal{H}_2 \cup \left\{(1,2),(2,1),(2,2),(0,3),(3,0)\right\} \\
\mathcal{H}_4 &= \mathcal{H}_3 \cup \left\{(1,3),(3,1),(2,3),(3,2),(4,0),(0,4)\right\}\\
 &\vdots \\
\end{align*}
Using Condition \ref{grid} and the design mask, the \emph{pairwise log-likelihood function} in \eqref{PLfull} can be rewritten as  
\begin{align}\label{PLfull2}
& PL^{(m^d,T)}(\bs{\psi})\nonumber \\
& =\sum\limits_{i_1=1}^{m}\cdots\sum\limits_{i_d=1}^{m}\sum\limits_{\stackrel{\bs{h}\in \mathcal{H}_r}{\bs{s}_{(i_1,\ldots,i_d)} + \bs{h} \in S}}\sum\limits_{k=1}^{T}\sum\limits_{l=k+1}^{\min\left\{k+p,T\right\}} \log f_{\bs{\psi}}(\eta(\bs{s}_{(i_1,\ldots,i_d)},t_k),\eta(\bs{s}_{(i_1,\ldots,i_d)}+\bs{h},t_l)) \nonumber \\
&= \sum\limits_{i_1=1}^{m}\cdots\sum\limits_{i_d=1}^{m}\sum\limits_{k=1}^{T} g_{\bs{\psi}}\left(i_1,\ldots,i_d,k;\mathcal{H}_r,p\right) - \mathcal{R}^{(m^d,T)}(\bs{\psi}), 
\end{align}
where 
\begin{align}
g_{\bs{\psi}}(i_1,\ldots,i_d,k;\mathcal{H}_r,p) &= 
\sum\limits_{\bs{h}\in\mathcal{H}_r}\sum\limits_{l=k+1}^{k+p}\log f_{\bs{\psi}}(\eta(\bs{s}_{(i_1,\ldots,i_d)},t_k),\eta(\bs{s}_{(i_1,\ldots,i_d)}+\bs{h},t_l)), \label{gpsi}
\end{align}
and 
$\mathcal{R}^{(m^d,T)}(\bs{\psi})$ is a boundary term, given by
\begin{align}
\mathcal{R}^{(m^d,T)}(\bs{\psi})&= \sum\limits_{i_1=1}^{m}\cdots\sum\limits_{i_d=1}^m\sum\limits_{\stackrel{\bs{h}\in \mathcal{H}_r}{\bs{s}_{(i_1,\ldots,i_d)} + \bs{h}\notin S}}\sum\limits_{k=1}^{T}\sum\limits_{\stackrel{l=k+1}{l>T}}^{k+p} \log f_{\bs{\psi}}(\eta(\bs{s}_{(i_1,\ldots,i_d)},t_k),\eta(\bs{s}_{(i_1,\ldots,i_d)}+\bs{h},t_l))
\label{boundary}
\end{align}

\begin{figure}[ht]
\centering
\vspace*{-1cm}
\includegraphics[scale=0.8]{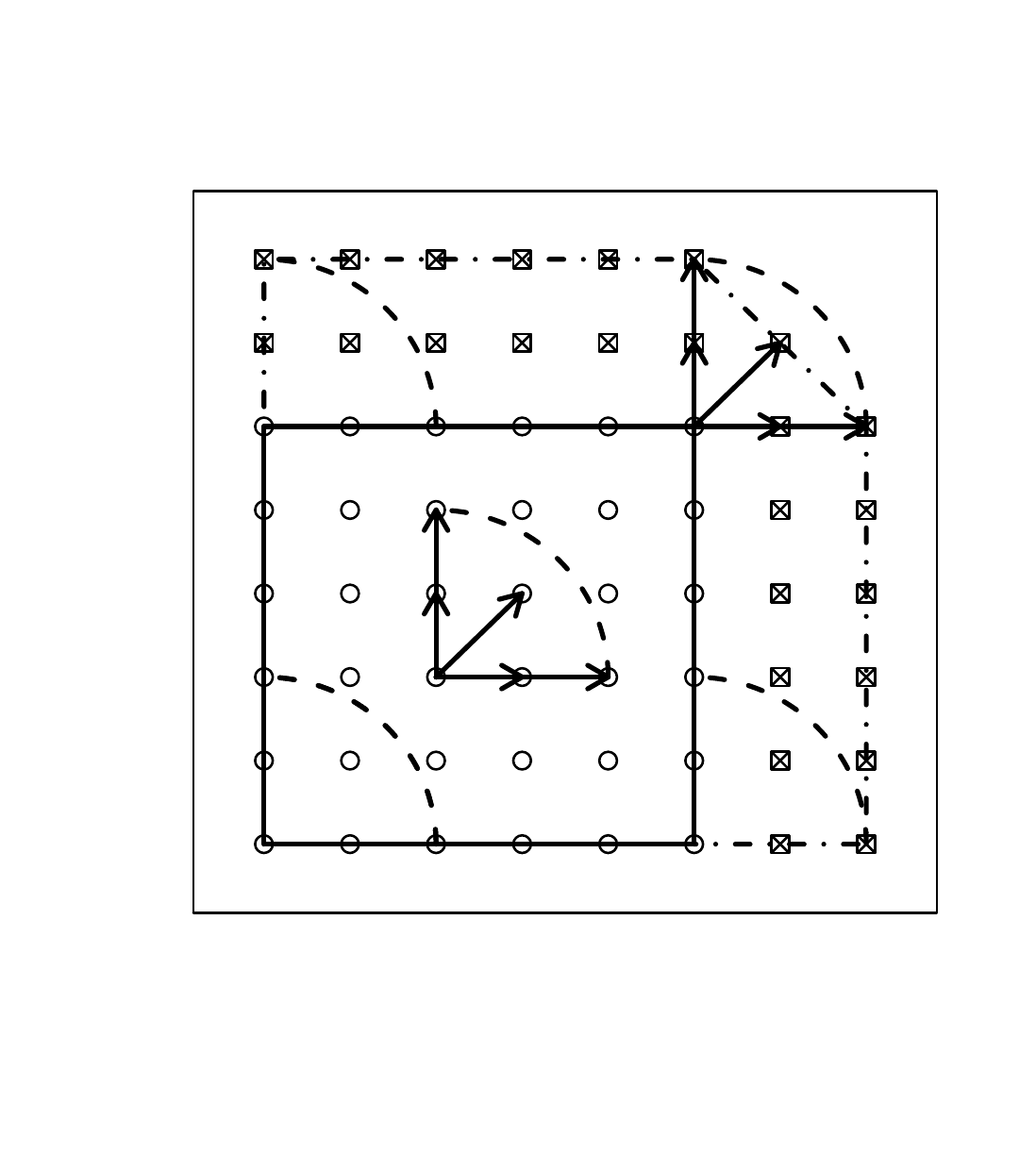}
\vspace*{-1.5cm}
\caption{Visualization of the boundary term $\mathcal{R}^{(m^d,T)}$ for $d=2$, $m=6$ and any time point, where the set $S$ of locations is the inner square and the outer polygon represents the endpoints of pairs in the boundary}
\label{gridskizze}
\end{figure}
Figure \ref{gridskizze} depcits a spatial grid with length $m=6$, where the inner square is the set of observed locations $S$ and the points in the outer polygon are endpoints of pairs which are in the boundary term $\mathcal{R}^{(m^d,T)}$. The figure visualizes the case $\mathcal{H}_2$ which is represented by the quarter circles.  

\section{Strong consistency of the pairwise likelihood estimates}\label{Consistency1}
In this section we establish strong consistency for the pairwise likelihood estimates introduced in Section \ref{PLmodel}. 
For univariate time series models Davis and Yau \cite{Davis5} proved strong consistency of the composite likelihood estimates in full detail.  
For max-stable random fields with replicates, which are independent in time, Padoan et al. \cite{Ribatet} showed consistency and asymptotic normality for the pairwise likelihood estimates. 
In contrast to previous studies, where either the spatial or the time domain increases,  we show strong consistency as the space-time domain increases jointly. 



\subsection{Ergodic properties for max-stable processes}\label{Mixing}

Stoev and Taqqu \cite{Stoev1} introduced extremal integrals as an analogy to sum-stable integrals. Based on the extremal integral representation of max-stable processes Stoev \cite{Stoev2} establishes conditions under which the max-stable process is ergodic. Wang et al. \cite{Wang} extend these results to a spatial setting. 
In the following, let $\tau_{(h_1,\ldots,h_d,u)}$ denote the multiparameter shift-operator.
In accordance with the definitions and results in Wang et al. \cite{Wang}, we define ergodic and mixing space-time processes. 
\begin{definition}
A strictly stationary space-time process $\left\{\eta(\bs{s},t), \bs{s}\in \bbr^d, t\in [0,\infty)\right\}$ is called \emph{ergodic}, if for all $A,B \in \sigma\left\{\eta(\bs{s},t),\bs{s}\in\bbr^d, t\in [0,\infty)\right\}$
\begin{equation}
\lim\limits_{m_1\cdots m_d T\to \infty}\frac{1}{m_1\cdots m_d T}\sum\limits_{h_1=1}^{m_1}\cdots \sum\limits_{h_d=1}^{m_d} \sum\limits_{u=1}^T P\left(A\cap\tau_{(h_1,\ldots,h_d,u)}(B)\right) = P(A)P(B).
\label{ergodic}
\end{equation} 
If the process satisfies additionally
\begin{equation}
\lim\limits_{n\to \infty}P\left(A\cap\tau_{(s_{1,n},\ldots,s_{d,n},t_n)}(B)\right) = P(A)P(B),
\end{equation}
for all sequences $\left\{(s_{1,n},\ldots,s_{d,n},t_n), n\in \bbn\right\}$ with $\max\left\{|s_{1,n}|,\ldots,|s_{d,n}|,|t_n|\right\}\to \infty$, we call the process \emph{mixing}. 
\end{definition}
Note in \eqref{ergodic} that in contrast to the ergodic theorem in Wang et al. \cite{Wang}, the number of terms in each sum is not equal, since we have an additional sum for the time component. 
We focus on max-stable processes with extremal integral representation
\begin{equation}
\eta(s_1,\ldots,s_d,t) = \stackrel{e}{\int\limits_{E}} U_{(s_1,\ldots,s_d,t)}(f) dM_{1}, 
\label{extrep}
\end{equation}
where $U_{(s_1,\ldots,s_d,t)} : L^1(\mu) \to L^1(\mu)$ given by $U_{(s_1,\ldots,s_d,t)}(f) = f\circ \tau_{(s_1,\ldots,s_d,t)}$ is a group of max-linear automorphisms with $U_{(0,\ldots,0,0)}(f) = f$ and the control measure $\mu$ is the distribution of the space-time process, 
$$\mu(A) = P\left(\eta(s_1,\ldots,s_d,t) \in A\right), \quad A \in \sigma\left\{\eta(\bs{s},t),(\bs{s},t)\in\bbr^d\times [0,\infty)\right\}.$$ 
The following result is a direct extension of the uniparameter theorem established in Stoev \cite{Stoev2}, Theorem 3.4, 
and the multiparameter counterpart in Wang et al. \cite{Wang}. 
\begin{proposition}[Wang et al. \cite{Wang}, Theorem 5.6]
The max-stable process defined in \eqref{extrep} is mixing, if and only if 
\begin{equation}
\int_E U_{(s_{1,n},\ldots,s_{d,n},t_n)}(f) \wedge U_{(0,\ldots,0,0)}(f) d\mu = \int_E U_{(s_{1,n},\ldots,s_{d,n},t_n)}(f) \wedge f d\mu \to 0,
\end{equation}
for all sequences $\left\{(s_{1,n},\ldots,s_{d,n},t_n)\right\}$ with $\max\left\{|s_{1,n}|,\ldots,|s_{d,n}|,|t_n|\right\}\to \infty$ as $n\to\infty$. 
\end{proposition}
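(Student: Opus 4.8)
\noindent The plan is to adapt the extremal-integral computation underlying the uniparameter result of Stoev \cite{Stoev2} and its multiparameter version in Wang et al. \cite{Wang} to the present $\bbr^d\times[0,\infty)$ shift group. The starting point is the classical closed-form expression for the finite-dimensional distributions of a $1$-Fr\'echet max-stable process with representation \eqref{extrep}: for space-time points $(\bs s_1,t_1),\ldots,(\bs s_k,t_k)$ and levels $z_1,\ldots,z_k>0$,
\[
P\big(\eta(\bs s_i,t_i)\le z_i,\ i=1,\ldots,k\big)=\exp\Big\{-\int_E\bigvee_{i=1}^k\frac{U_{(\bs s_i,t_i)}(f)}{z_i}\,d\mu\Big\}.
\]
The cylinder events $A=\{\eta(\bs s_i,t_i)\le a_i,\ i\in I\}$ with $I$ finite and $a_i>0$ form a $\pi$-system generating $\sigma\{\eta(\bs s,t)\}$. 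Writing $c_n\coloneqq(s_{1,n},\ldots,s_{d,n},t_n)$ and using that the shifts $\tau_{c_n}$ are measure preserving by stationarity (so that $|P(A\cap\tau_{c_n}B)-P(A'\cap\tau_{c_n}B')|\le P(A\triangle A')+P(B\triangle B')$), a routine $L^1(P)$-approximation reduces the verification of $P(A\cap\tau_{c_n}B)\to P(A)P(B)$ to the case where $A$ and $B=\{\eta(\bs y_j,r_j)\le b_j,\ j\in J\}$ are both such cylinder events; here $\tau_{c_n}B$ is the cylinder event obtained by shifting the index set of $B$ by $c_n$.

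\smallskip\noindent Next I would apply the identity $\int_E(g\vee h)\,d\mu=\int_E g\,d\mu+\int_E h\,d\mu-\int_E(g\wedge h)\,d\mu$ to the exponents of $P(A\cap\tau_{c_n}B)$, $P(A)$ and $P(B)$, together with stationarity in the form $\int_E U_{c_n}(g)\,d\mu=\int_E g\,d\mu$, to obtain the exact identity
\[
P(A\cap\tau_{c_n}B)=P(A)\,P(B)\,\exp\Big\{\int_E\Big(\bigvee_{i\in I}\tfrac{U_{(\bs s_i,t_i)}(f)}{a_i}\Big)\wedge\Big(\bigvee_{j\in J}\tfrac{U_{(\bs y_j,r_j)+c_n}(f)}{b_j}\Big)\,d\mu\Big\}.
\]
Hence the mixing relation for the pair $(A,B)$ holds if and only if the integral in the exponent tends to $0$. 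Taking $I=J=\{0\}$ with $(\bs s_0,t_0)=(\bs y_0,r_0)=(\bs 0,0)$ and $a_0=b_0=1$ shows at once that mixing forces $\int_E U_{c_n}(f)\wedge f\,d\mu\to0$ along every sequence with $\max\{|s_{1,n}|,\ldots,|s_{d,n}|,|t_n|\}\to\infty$, which is the asserted necessity.

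\smallskip\noindent For sufficiency I would bound the integrand using $\big(\bigvee_i g_i\big)\wedge\big(\bigvee_j h_j\big)\le\sum_{i,j}(g_i\wedge h_j)$ (valid for nonnegative functions, since the left side equals $g_{i_0}\wedge h_{j_0}$ at the maximizing indices) and $\tfrac{g}{a}\wedge\tfrac{h}{b}\le\tfrac{1}{a\wedge b}(g\wedge h)$, which gives
\[
\int_E\Big(\bigvee_{i\in I}\tfrac{U_{(\bs s_i,t_i)}(f)}{a_i}\Big)\wedge\Big(\bigvee_{j\in J}\tfrac{U_{(\bs y_j,r_j)+c_n}(f)}{b_j}\Big)\,d\mu\le\sum_{i\in I}\sum_{j\in J}\frac{1}{a_i\wedge b_j}\int_E U_{(\bs s_i,t_i)}(f)\wedge U_{(\bs y_j,r_j)+c_n}(f)\,d\mu.
\]
A measure-preserving change of variables (composition with $U_{-(\bs s_i,t_i)}$) turns the $(i,j)$ term into $\tfrac{1}{a_i\wedge b_j}\int_E f\wedge U_{(\bs y_j,r_j)+c_n-(\bs s_i,t_i)}(f)\,d\mu$; since $I$ and $J$ are finite and the coordinatewise maximal modulus of $(\bs y_j,r_j)+c_n-(\bs s_i,t_i)$ diverges together with that of $c_n$, the hypothesis applies to each of these finitely many shifted sequences, so every summand and hence the whole sum tends to $0$. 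This yields the mixing relation on the generating $\pi$-system, and the approximation of the first paragraph upgrades it to all of $\sigma\{\eta(\bs s,t)\}$.

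\smallskip\noindent The only mildly delicate points I anticipate are (i) the passage from cylinder events to arbitrary measurable events, which is standard in ergodic theory but relies on $L^1$-density of the algebra generated by the cylinders together with measure-invariance of the shifts, and (ii) the bookkeeping needed to confirm that the perturbed sequences $(\bs y_j,r_j)+c_n-(\bs s_i,t_i)$ remain admissible. Neither is a genuine obstacle: the heart of the argument is simply the extremal-integral identity of Stoev \cite{Stoev2} rewritten with the multiparameter shift group of Wang et al. \cite{Wang}.
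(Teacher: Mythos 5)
The paper offers no proof of this proposition beyond the citation to Stoev \cite{Stoev2}, Theorem 3.4, and Wang et al.\ \cite{Wang}, Theorem 5.6, and your argument is precisely the argument of those references: the exact identity $P(A\cap\tau_{c_n}B)=P(A)P(B)\exp\{\int_E(\cdot)\wedge(\cdot)\,d\mu\}$ on cylinder events via $\int(g\vee h)=\int g+\int h-\int(g\wedge h)$, necessity by taking singleton cylinders, sufficiency by the subadditive bound $(\bigvee_i g_i)\wedge(\bigvee_j h_j)\le\sum_{i,j}g_i\wedge h_j$ plus measure-preserving re-centering, and a standard $\pi$-system/approximation step. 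The proposal is correct and takes essentially the same route as the cited source.
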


Wang et al. \cite{Wang} showed, that the ergodic theorem stated above holds for mixing max-stable processes with extremal integral representation \eqref{extrep} in the case of $T=m$. The extension to the multiparameter case where $T\neq m$ is a simple generalisation using Theorem 6.1.2 in Krengel \cite{Krengel}, which is a multiparameter extension of the Akcoglu's ergodic theorem. Ergodic properties of Brown-Resnick processes have been studied for the uniparameter case in Stoev and Taqqu \cite{Stoev1} and Wang and Stoev \cite{Wang2}. We summarize the results in the following proposition. 

\begin{proposition}\label{mixing1}
The Brown-Resnick process in Proposition \ref{BrownResnick} with extremal integral representation 
$$\bigg{\{}\stackrel{e}{\int\limits_{E}} \exp\left\{W(\bs{s},t)-\delta(\bs{s},t)\right\}dM_1\ \ \bs{s}\in \bbr^d, t\in [0,\infty)\bigg{\}}$$
is mixing in space and time. 
The strong law of large numbers holds;
\begin{equation}
\frac{1}{m^dT}\sum\limits_{i_1=1}^m\cdots\sum\limits_{i_d=1}^m\sum\limits_{k=1}^T g(\eta(\bs{s}_{(i_1,\ldots,i_d)},t_k)) \stackrel{a.s.}{\rightarrow} \mathbb{E}\left[g(\eta(\bs{s}_{(1,\ldots,1)},t_1))\right], \quad mT\to \infty
\label{SLLN}
\end{equation}
where $g$ is a measurable function. 
\end{proposition}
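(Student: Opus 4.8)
The proposition has two parts: the strong law \eqref{SLLN} will follow from mixing by a multiparameter ergodic theorem, so the substantive work is to establish mixing. My plan is to apply the criterion of Wang et al.\ \cite{Wang} (the Proposition stated just above) to the extremal integral representation displayed in the statement. Writing $V(\bs{s},t)=\exp\{W(\bs{s},t)-\delta(\bs{s},t)\}$ for the spectral function and $\mu$ for the control measure, the quantity that must be driven to zero along any sequence $(\bs{s}_n,t_n)$ with $\max\{|s_{1,n}|,\dots,|s_{d,n}|,|t_n|\}\to\infty$ is, for the de Haan representation,
\[
\int_E U_{(\bs{s}_n,t_n)}(f)\wedge f\,d\mu \;=\; \mathbb{E}\big[V(\bs{0},0)\wedge V(\bs{s}_n,t_n)\big].
\]
Since $W(\bs{0},0)=0$ we have $V(\bs{0},0)=1$, so the right-hand side equals $\mathbb{E}\big[1\wedge\exp\{W(\bs{s}_n,t_n)-\delta(\bs{s}_n,t_n)\}\big]$; the same Gaussian computation that produces \eqref{chi} — splitting on the sign of $W(\bs{s}_n,t_n)-\delta(\bs{s}_n,t_n)$, with $\var(W(\bs{s}_n,t_n))=2\delta(\bs{s}_n,t_n)$, and applying the Gaussian change-of-mean identity to the exponential term — evaluates it (up to the normalisation of $\delta$) as the tail dependence coefficient $\chi(\bs{s}_n,t_n)$ of \eqref{chi}, a strictly decreasing function of $\delta(\bs{s}_n,t_n)$ that vanishes as $\delta(\bs{s}_n,t_n)\to\infty$. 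Finally, since $\delta(\bs{h},u)=\theta_1\|\bs{h}\|^{\alpha_1}+\theta_2|u|^{\alpha_2}$ with $\theta_1,\theta_2>0$, we have $\delta(\bs{s}_n,t_n)\ge\theta_2|t_n|^{\alpha_2}$ and, by equivalence of norms on $\bbr^d$, $\delta(\bs{s}_n,t_n)\ge c\,(\max_i|s_{i,n}|)^{\alpha_1}$ for some $c>0$; hence $\max\{|s_{1,n}|,\dots,|s_{d,n}|,|t_n|\}\to\infty$ forces $\delta(\bs{s}_n,t_n)\to\infty$, the integral tends to $0$, and the Wang et al.\ criterion gives that $\eta$ is mixing in space and time.

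For the strong law, mixing implies that the $\mathbb{Z}^{d+1}$-shift action $(\tau_{(h_1,\dots,h_d,u)})$ is ergodic in the sense of the Definition above: averaging the mixing limit over the growing box $\{1,\dots,m_1\}\times\cdots\times\{1,\dots,m_d\}\times\{1,\dots,T\}$, the indices with small supremum norm form a vanishing fraction and the corresponding terms are uniformly bounded, so the Ces\`aro average converges to $P(A)P(B)$. For measurable $g$ with $\mathbb{E}|g(\eta(\bs{0},0))|<\infty$ I would then invoke the multiparameter pointwise ergodic theorem for the $d$ spatial unit shifts together with the temporal unit shift: it yields a.s.\ convergence of $\frac{1}{m^dT}\sum_{i_1=1}^{m}\cdots\sum_{i_d=1}^{m}\sum_{k=1}^{T} g(\eta(\bs{s}_{(i_1,\dots,i_d)},t_k))$ to the conditional expectation with respect to the invariant $\sigma$-field, which by ergodicity equals the constant $\mathbb{E}[g(\eta(\bs{0},0))]=\mathbb{E}[g(\eta(\bs{s}_{(1,\dots,1)},t_1))]$ (strict stationarity). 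Wang et al.\ \cite{Wang} prove this for boxes with equal side lengths ($T=m$); the case $T\neq m$, with the box growing in every direction, follows from Theorem 6.1.2 in Krengel \cite{Krengel}, the multiparameter form of Akcoglu's ergodic theorem.

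The delicate step is the mixing verification: selecting the extremal integral representation so that the Wang et al.\ criterion is literally applicable, evaluating the truncated Gaussian integral $\mathbb{E}[V(\bs{0},0)\wedge V(\bs{s}_n,t_n)]$ in closed form, and — the point that is genuinely new here — checking that its decay to $0$ holds along every sequence for which merely the \emph{maximum} of the coordinates diverges, rather than all of them simultaneously. Once mixing is in hand, \eqref{SLLN} is a routine application of the cited multiparameter ergodic theorems, the only remaining care being the integrability of $g(\eta(\cdot))$.
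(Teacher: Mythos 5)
Your proposal is correct and follows the same skeleton the paper intends: mixing via the Wang--Stoev criterion $\int_E U_{(\bs{s}_n,t_n)}(f)\wedge f\,d\mu\to 0$, then the strong law via the multiparameter ergodic theorem (Krengel, Theorem 6.1.2) to handle $T\neq m$. The difference is that the paper does not actually carry out the mixing verification --- it delegates it to citations of Stoev--Taqqu and Wang--Stoev for the uniparameter Brown--Resnick case --- whereas you compute the integral explicitly. Your computation is sound: with $\var(W(\bs{s},t))=2\delta(\bs{s},t)$ one gets $\mathbb{E}[1\wedge e^{W-\delta}]=2\bigl(1-\Phi(\sqrt{\delta/2})\bigr)$, which is not literally the $\chi$ of \eqref{chi} (the argument is $\sqrt{\delta/2}$ rather than $\sqrt{\delta}$, as you hedge with ``up to the normalisation''), but this does not matter since only the decay to $0$ as $\delta\to\infty$ is used; and your observation that $\delta(\bs{s}_n,t_n)\geq\max\{\theta_1 c^{\alpha_1}(\max_i|s_{i,n}|)^{\alpha_1},\theta_2|t_n|^{\alpha_2}\}$ correctly handles sequences where only the maximum coordinate diverges. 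Two small caveats, both of which are imprecisions inherited from the paper rather than errors of yours: the ergodic theorem requires $\mathbb{E}|g(\eta(\bs{0},0))|<\infty$, which the statement omits but you rightly flag; and the limit ``$mT\to\infty$'' should really be read as both $m\to\infty$ and $T\to\infty$ (or one of them with the other fixed, which reduces to a lower-dimensional ergodic theorem), since a genuinely multiparameter ergodic average needs control over how the box grows.
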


\subsection{Consistency for large $mT$}\label{Consistency}
In the following we show that the pairwise likelihood estimate resulting from maximizing \eqref{PLfull2} for the model defined in Proposition \ref{BrownResnick} is strongly consistent.
\begin{theorem}\label{Consistencythm}
Assume that the true parameter vector $\bs{\psi}^* = (\theta_1^*,\alpha_1^*,\theta_2^*,\alpha_2^*)$ lies in a compact set $\Psi$, which does not contain $\bs{0}$ and which satisfies for some $c>0$ 
\begin{equation}
\Psi \subseteq\left\{\min\left\{\theta_1,\theta_2\right\}>c, \alpha_1,\alpha_2\in (0,2]\right\}.
\label{parspace}
\end{equation}
Assume also that the identifiability condition 
\begin{align}
\bs{\psi}  = \bs{\widetilde{\psi}}&\quad \Leftrightarrow \quad   
f_{\bs{\psi}}(\eta(\bs{s}_1,t_1),\eta(\bs{s}_2,t_2)) = f_{\bs{\widetilde{\psi}}}(\eta(\bs{s}_1,t_1),\eta(\bs{s}_2,t_2)), \ \ \text{a.s.}
\label{ident1}
\end{align}
is satisfied for all $(\bs{s}_1,t_1),(\bs{s}_2,t_2)$. 
It then follows that the pairwise likelihood estimate 
$$\bs{\hat{\psi}}_{m^d,T} = \argmax\limits_{\bs{\psi} \in \Psi} PL^{(m^d,T)}(\bs{\psi})$$
is strongly consistent, i.e. $\bs{\hat{\psi}}_{m^d,T} \stackrel{a.s.}{\to} \bs{\psi}^*$ as $mT \to \infty$. 
\end{theorem}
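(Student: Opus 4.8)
The plan is to run the classical Wald/Cram\'er argument for consistency of $M$-estimators, with the ergodic strong law of large numbers of Proposition \ref{mixing1} playing the role of the usual i.i.d.\ law of large numbers. Write $Q_{m^d,T}(\bs{\psi})\coloneqq (m^dT)^{-1}PL^{(m^d,T)}(\bs{\psi})$ and use the decomposition \eqref{PLfull2},
\[
Q_{m^d,T}(\bs{\psi})=\frac{1}{m^dT}\sum_{i_1=1}^{m}\cdots\sum_{i_d=1}^{m}\sum_{k=1}^{T}g_{\bs{\psi}}(i_1,\ldots,i_d,k;\mathcal{H}_r,p)-\frac{1}{m^dT}\mathcal{R}^{(m^d,T)}(\bs{\psi}).
\]
The first step is to dispose of the boundary term: $\mathcal{R}^{(m^d,T)}(\bs{\psi})$ contains only $O(m^{d-1}T)+O(m^d)$ summands (endpoints leaving $S$, or time indices exceeding $T$), so, once the integrable envelope below is in hand, $(m^dT)^{-1}\mathcal{R}^{(m^d,T)}(\bs{\psi})\to 0$ uniformly in $\bs{\psi}\in\Psi$ as $m\to\infty$; if $m$ is held fixed one instead applies the ergodic theorem directly to the time-indexed sequence of whole-grid contributions, and the limiting contrast below becomes a finite sum over the pairs actually present in $S$.

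Second, the leading sum is an average of the strictly stationary field $\{g_{\bs{\psi}}(i_1,\ldots,i_d,k;\mathcal{H}_r,p)\}$, which is a fixed measurable function of finitely many coordinates of the mixing --- hence ergodic --- Brown--Resnick process. Applying \eqref{SLLN} (to the shifted field) yields, for each fixed $\bs{\psi}\in\Psi$,
\[
Q_{m^d,T}(\bs{\psi})\ \stackrel{a.s.}{\to}\ \mathcal{L}(\bs{\psi})\coloneqq\sum_{\bs{h}\in\mathcal{H}_r}\sum_{u=1}^{p}\mathbb{E}\big[\log f_{\bs{\psi}}(\eta(\bs{0},0),\eta(\bs{h},u))\big],\qquad mT\to\infty ,
\]
provided each summand is integrable. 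Both this and the stronger envelope $\mathbb{E}\big[\sup_{\bs{\psi}\in\Psi}|\log f_{\bs{\psi}}(\eta(\bs{0},0),\eta(\bs{h},u))|\big]<\infty$ are checked from the explicit bivariate density in Lemma \ref{bivdenlemma}, the unit-Fr\'echet marginals and H\"usler--Reiss bivariate law of $\eta$, and the observation that on the integer lattice with $\bs{h}\neq\bs{0}$, $u\geq1$, under the constraint \eqref{parspace}, $\delta=\delta(\bs{h},u)=\theta_1\|\bs{h}\|^{\alpha_1}+\theta_2|u|^{\alpha_2}$ stays in a compact set bounded away from $0$ and $\infty$ as $\bs{\psi}$ ranges over the compact $\Psi$.

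Third, the integrable envelope together with continuity of $\bs{\psi}\mapsto\log f_{\bs{\psi}}(x_1,x_2)$ on $\Psi$ (immediate from \eqref{q1q2}--\eqref{bivden}), through a standard $\varepsilon$-net / uniform ergodic theorem argument, upgrades the pointwise convergence to $\sup_{\bs{\psi}\in\Psi}|Q_{m^d,T}(\bs{\psi})-\mathcal{L}(\bs{\psi})|\stackrel{a.s.}{\to}0$ and shows that $\mathcal{L}$ is continuous on the compact $\Psi$. Hence any maximiser $\bs{\hat{\psi}}_{m^d,T}$ satisfies $\mathcal{L}(\bs{\hat{\psi}}_{m^d,T})\to\sup_{\Psi}\mathcal{L}$ a.s., so every a.s.\ limit point of $\bs{\hat{\psi}}_{m^d,T}$ is a maximiser of $\mathcal{L}$. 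By the information (Kullback--Leibler) inequality, each term $\mathbb{E}[\log f_{\bs{\psi}}(\eta(\bs{0},0),\eta(\bs{h},u))]$ is maximised exactly when $f_{\bs{\psi}}$ equals the true bivariate density of $(\eta(\bs{0},0),\eta(\bs{h},u))$; thus $\mathcal{L}$ attains its maximum at $\bs{\psi}^*$, and any other maximiser $\bs{\bar{\psi}}$ must have $f_{\bs{\bar{\psi}}}=f_{\bs{\psi}^*}$ at all lags $\bs{h}\in\mathcal{H}_r$, $1\leq u\leq p$. The identifiability assumption \eqref{ident1} then forces $\bs{\bar{\psi}}=\bs{\psi}^*$, and therefore $\bs{\hat{\psi}}_{m^d,T}\stackrel{a.s.}{\to}\bs{\psi}^*$.

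The step that needs real work is the integrable-envelope estimate: one must extract from the formulas of Lemma \ref{bivdenlemma} bounds on $|\log f_{\bs{\psi}}(x_1,x_2)|$ that hold uniformly as $\delta$ ranges over its compact, bounded-away-from-zero image, and control the behaviour as $x_1,x_2\to 0$ and as $x_1,x_2\to\infty$ against the unit-Fr\'echet tails of $\eta$ --- this is precisely where the constraint \eqref{parspace} and the restrictions $\bs{h}\neq\bs{0}$, $u\geq1$ are used. Secondary points to handle with care are that the finite lag set $\{(\bs{h},u):\bs{h}\in\mathcal{H}_r,\ 1\leq u\leq p\}$ present in $PL^{(m^d,T)}$ be rich enough for \eqref{ident1} to determine all four components of $\bs{\psi}$ from the corresponding values of $\delta$, and that the multiparameter ergodic theorem behind \eqref{SLLN} be invoked in the regime intended by ``$mT\to\infty$''.
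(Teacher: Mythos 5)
Your proposal follows essentially the same route as the paper: Wald's method with the ergodic strong law of Proposition \ref{mixing1} giving the (uniform, via an integrable envelope over the compact $\Psi$ with $\delta$ bounded away from zero) convergence of the normalized pairwise likelihood, the boundary term killed by a counting argument, and Jensen's inequality plus \eqref{ident1} identifying $\bs{\psi}^*$ as the unique maximizer. The one step you flag as ``needing real work'' --- the envelope bound on $|\log f_{\bs{\psi}}|$ via Lemma \ref{bivdenlemma}, the unit-Fr\'echet marginals and \eqref{parspace} --- is carried out in the paper exactly as you describe, so the proposal is correct and matches the paper's proof.
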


\begin{remark}\label{Identifiability}
For the identifiability assumption \eqref{ident1} we consider different cases according to the maximal space-time lag $(r,p)$ included in the composite likelihood. Recall that the pairwise density, see Lemma \ref{bivdenlemma}, depends on the spatial distance $\bs{h}$ and the time lag $u$ only through the function $\delta(\bs{h},u) = \theta_1\|\bs{h}\|^{\alpha_1}+\theta_2|u|^{\alpha_2}$. For specific combinations of $(r,p)$ not all parameters are identifiable. 
For example, if the maximal spatial lag taken into account in the estimation equals one, i.e. $r=1$, and $p>1$, the parameter $\alpha_1$ is not identifiable. Strong consistency still holds for the remaining parameters. Table \ref{identpar} lists the various scenarios. 
\begin{table}[h]
\centering 
\begin{tabular}{l|l|l}
Maximal spatial lag $r$& Maximal temporal lag $p$& Identifiable parameters \\ 
\hline
0 & 1 & $\theta_2$ \\
0 & $p$, $p>1$ & $\theta_2$, $\alpha_2$ \\
1 & 0 & $\theta_1$ \\
$r$, $r>1$ & 0 & $\theta_1$, $\alpha_1$ \\
1&1 & $\theta_1$, $\theta_2$ \\
1 & $p$, $p>1$ & $\theta_1$, $\theta_2$, $\alpha_2$ \\
$r$, $r>1$ & 1 & $\theta_1$, $\alpha_1$, $\theta_2$ \\
$r$, $r>1$ & $p$, $p>1$ & $\theta_1$, $\alpha_1$, $\theta_2$, $\alpha_2$ 
\end{tabular}
\caption{Identifiable parameters corresponding to different maximal space-time lags $(r,p)$ included in the pairwise likelihood function.}
\label{identpar}
\end{table}
\end{remark}

\begin{proof}[Proof of Theorem \ref{Consistencythm}]
To show strong consistency of the estimates we follow the method of Wald \cite{Wald}. Accordingly, it suffices to show the following conditions. 
\begin{enumerate}[({C}1)]
\item{Strong law of large numbers: Uniformly on the compact set $\Psi$,  
\begin{equation*}
\frac{1}{m^dT}PL^{(m^d,T)}(\bs{\psi}) \stackrel{a.s.}{\longrightarrow} PL(\bs{\psi})\coloneqq \mathbb{E}\left[g_{\bs{\psi}}(1,\ldots,1,1;\mathcal{H}_r,p)\right] , \ \  mT \to \infty.
\end{equation*}}
\item{The function $PL(\bs{\psi})$ is uniquely maximized at the true parameter vector $\bs{\psi}^* \in \bs{\Psi}$.}
\end{enumerate}
From (C1) and (C2) strong consistency follows.
First we prove (C1). 
Recall from \eqref{PLfull2} that the pairwise likelihood function is given by 
\begin{equation*}
PL^{(m^d,T)}(\bs{\psi}) 
= \sum\limits_{i_1=1}^{m}\cdots\sum\limits_{i_d=1}^{m}\sum\limits_{k=1}^{T} g_{\bs{\psi}}\left(i_1,\ldots,i_d,k;\mathcal{H}_r,p\right) - \mathcal{R}^{(m^d,T)}(\bs{\psi}), 
\end{equation*}
where $g_{\bs{\psi}}$ and $\mathcal{R}^{(m^d,T)}(\bs{\psi})$ are defined in \eqref{gpsi} and \eqref{boundary}, respectively.
The pointwise convergence of the first term on the right hand side to $PL(\bs{\psi})$ follows immediately from Proposition \ref{mixing1} together with the fact that $g_{\bs{\psi}}$ in \eqref{gpsi} is a measurable function of lagged versions of  $\eta(\bs{s}_{(i_1,\ldots,i_d),t_k})$.

In the following, we show that the convergence is uniform and that the boundary term defined in \eqref{boundary} converges to zero almost surely.
For both steps, observe first that we can bound the log-density from Lemma \ref{bivdenlemma}. For $x_1,x_2> 0$
\begin{align*}
\left|\log f_{\bs{\psi}}(x_1,x_2)\right| ={} & \left|-V + \log\left(\frac{\partial V}{\partial x_1}\frac{\partial V}{\partial x_2} - \frac{\partial^2 V}{\partial x_1\partial x_2}\right)\right|\\
 \leq{} & \left|-\frac{1}{x_1}\Phi(q_{\psi}^{(1)})\right| + \left|-\frac{1}{x_1}\Phi(q_{\psi}^{(2)})\right| + \left|\frac{\partial V}{\partial x_1}\frac{\partial V}{\partial x_2} - \frac{\partial^2 V}{\partial x_1\partial x_2}\right| \\
\leq{} & \frac{1}{x_1} + \frac{1}{x_2} + \frac{1}{x_1^2x_2^2} + \frac{1}{2\sqrt{\delta(\bs{h},u)}}\left(\frac{1}{x_1^2x_2^2} + \frac{1}{x_1^3x_2} + \frac{1}{x_1^2x_2^2}+\frac{1}{x_1x_2^3} + \frac{1}{x_1^2x_2}+ \frac{1}{x_1 x_2^2} \right)\\
&  + \frac{1}{4\delta(\bs{h},u)}\left(\frac{1}{x_1^2x_2^2} + \frac{1}{x_1^3 x_2} + \frac{1}{x_1x_2^3} 
+ \frac{1}{x_1^2x_2^2} + \left|\frac{q_{\psi}^{(1)}}{x_1^2x_2} + \frac{q_{\psi}^{(2)}}{x_1x_2^2}\right|\right),
\end{align*}
where $q_{\psi}^{(1)}$, $q_{\psi}^{(2)}$ and $V$ are defined in \eqref{q1q2} and $\eqref{V}$, respectively, where $\Phi(\cdot)\leq 1$ and $\varphi(\cdot) \leq 1$ were used.
Since the marginal distributions of the max-stable space-time process are assumed to be standard Fr\'{e}chet, it follows that for every fixed location $\bs{s}\in S$ and fixed time point $t\in T$ $1/\eta(\bs{s},t)$ is standard exponentially distributed. Using H\"older's inequality, it follows that 
\begin{align*}
\mathbb{E}_{\bs{\psi}^*}\left[\left|\log f_{\bs{\psi}}(\eta(\bs{s}_1,t_1),\eta(\bs{s}_2,t_2))\right|\right] \leq{}& K_1 + \frac{K_2}{2\sqrt{\delta(\bs{h},u)}} + \frac{K_3}{4\delta(\bs{h},u)}, 
\end{align*}
where $K_1,K_2,K_3>0$ are finite constants. 
Since the parameter space $\Psi$ is assumed to be compact and together with assumption \eqref{parspace}, $\delta$ can be bounded away from zero, i.e. 
\begin{align}
\delta(\bs{h},u) &= \theta_1\|\bs{h}\|^{\alpha_1} + \theta_2|u|^{\alpha_2} \geq \min\left\{\theta_1,\theta_2\right\}(\|\bs{h}\|^{\alpha_1} + |u|^{\alpha_2}) \nonumber \\
&> c(\|\bs{h}\|^{\alpha_1} + |u|^{\alpha_2})  > \tilde{c} >0, 
\label{boundarydelta}
\end{align} 
since $\alpha_1,\alpha_2 \in (0,2]$, where $\tilde{c}>0$ is some constant independent of the parameters. 
Therefore, 
\begin{align}
\mathbb{E}_{\bs{\psi}^*}\left[\left|\log f_{\bs{\psi}}(\eta(\bs{s}_1,t_1),\eta(\bs{s}_2,t_2))\right|\right]<{}& K_1 + \frac{K_2}{2\sqrt{\tilde{c}}} + \frac{K_3}{4\tilde{c}} \eqqcolon K_4 < \infty, 
\label{expfinite}
\end{align}
where $K_4>0$. 
Note that in the same way we can show that the expectation of the squared bivariate log-density is finite, since it only involves higher order moments of the exponential distribution. 

To establish uniform convergence, we follow Straumann and Mikosch \cite{Straumann1}, Theorem 2.7, and show that 
$$\mathbb{E}\left[\sup\limits_{\bs{\psi} \in \bs{\Psi}}\left|g_{\bs{\psi}}(1,\ldots,1,1;\mathcal{H}_r,p)\right|\right] < \infty.$$
It is sufficient to verify that
$$\mathbb{E}\left[\sup\limits_{\bs{\psi} \in \bs{\Psi}}\left|\log f_{\bs{\psi}}(\eta(\bs{s}_{(i_1,\ldots,i_d)},t_k),\eta(\bs{s}_{(i_1,\ldots,i_d)}+\bs{h},t_k+u))\right|\right] < \infty.$$
Since the pairwise density is continous and because of the compact parameter space, the statement follows immediately using \eqref{boundarydelta} and \eqref{expfinite}.  

As a last step for (C1) we show that the boundary term $\mathcal{R}^{(m^d,T)}(\bs{\psi})$ converges to 0 almost surely. For notational simplicity, we only consider the case $d=2$. The general case is proved analogously. First note from \eqref{boundary} that 
\begin{align*}
& \mathbb{E}\left[\left|\frac{1}{m^2T}\mathcal{R}^{(m^2,T)}(\bs{\psi})\right| \right]\\ {}&\leq\frac{1}{m^2T}\sum\limits_{i_1=1}^{m}\sum\limits_{i_2=1}^m\sum\limits_{\stackrel{\bs{h}\in\mathcal{H}_r}{\bs{s}_{(i_1,i_2)}+\bs{h}\notin S}}\sum\limits_{k=1}^{T}\sum\limits_{\stackrel{l=k+1}{l>T}}^{k+p}\mathbb{E}\left[\left|\log f_{\bs{\psi}}(\eta(\bs{s}_{(i_1,i_2)},t_k),\eta(\bs{s}_{(i_1,i_2)}+\bs{h},t_l))\right|\right] \\
{}& \leq \frac{1}{m^2T} \sum\limits_{i_1=1}^{m}\sum\limits_{i_2=1}^m\sum\limits_{\stackrel{\bs{h}\in\mathcal{H}_r}{\bs{s}_{(i_1,i_2)}+\bs{h}\notin S}}\sum\limits_{k=1}^{T}\sum\limits_{\stackrel{l=k+1}{l>T}}^{k+p}K_4  \leq \frac{K_4K_5}{mT} \to 0, \ \ mT  \to \infty, 
\end{align*} 
where we used the bound derived in \eqref{expfinite} and the fact that the number of space-time points in the boundary is of order $m$ (independent of $T$) and, therefore, can be bounded by $K_5m$ with $K_5>0$ a constant independent of $m$ and $T$. 
We write $\mathcal{R}^{(m^2,T)}(\bs{\psi})$ in the following way using the function $g_{\bs{\psi}}$ in \eqref{gpsi}. Denote by $\mathcal{B}_{m,T}$ the set of space-time indices $(i_1,i_2,k)$ for which $\bs{s}_{i_1,i_2} + \bs{h}\notin S$ or $k>T$. In Figure \ref{gridskizze} $\mathcal{B}_{m,T}$ corresponds to the indices of the locations $\bs{s}_{(i_1,i_2)}\in S$ for which $\bs{s}_{(i_1,i_2)}+\bs{h}$ is in the outer polygon. The cardinality of the set $\mathcal{B}_{m,T}$ can be bounded by using the maximum norm instead of the euclidean norm, i.e.
$$|\mathcal{B}_{m,T}| \leq r(2m + 1) \eqqcolon K_5m. $$
The boundary term in \eqref{boundary} can then be written as 
$$\mathcal{R}^{(m^2,T)}(\bs{\psi}) = \sum\limits_{(i_1,i_2,k)\in \mathcal{B}_{m,T}} g_{\bs{\psi}}(i_1,i_2,k;\mathcal{H}_r,p),$$
where $g_{\bs{\psi}}$ is defined in \eqref{gpsi}. 
In the same way as before, it follows by the strong law of large numbers that
$$\frac{1}{|\mathcal{B}_{m,T}|}\sum\limits_{(i_1,i_2,k)\in\mathcal{B}_{m,T}} g_{\bs{\psi}}(i_1,i_2,k;\mathcal{H}_r,p) \stackrel{a.s.}{\longrightarrow} \mathbb{E}\left[g_{\bs{\psi}}(1,1,1;\mathcal{H}_r,p)\right],$$
uniformly on the compact set $\Psi$. 
Therefore, 
\begin{align*}
\frac{1}{m^2T}\mathcal{R}^{(m^2,T)}(\bs{\psi}) &\leq  \frac{K_5 }{m T}\frac{1}{|\mathcal{B}_{m,T}|}\sum\limits_{(i_1,i_2,k)\in \mathcal{B}_{m,T}} g_{\bs{\psi}}(i_1,i_2,k;\mathcal{H}_r,p) \stackrel{a.s.}{\longrightarrow} 0,
\end{align*}
since $\mathbb{E}\left[|g_{\bs{\psi}}(1,1,1;\mathcal{H}_r,p)|\right]<\infty$.
This proves (C1). 

To prove (C2), note by Jensen's inequality that 
\begin{align*}
& \mathbb{E}_{\bs{\psi}^*}\left[\log\left(\frac{f_{\bs{\psi}}(x_1,x_2)}{f_{\bs{\psi^*}}(x_1,x_2)}\right)\right] \leq 
\log\left(\mathbb{E}_{\bs{\psi}^*}\left[\frac{f_{\bs{\psi}}(x_1,x_2)}{f_{\bs{\psi^*}}(x_1,x_2)}\right]\right) = 0 \end{align*} 
and, hence, $$PL(\bs{\psi}) \leq PL(\bs{\psi}^*).$$
So, $\bs{\psi}^*$ maximizes $PL(\bs{\psi})$ and is the unique optimum if and only if there is equality in Jensen's inequality. However, this is precluded by \eqref{ident1}.
\end{proof}

\section{Asymptotic normality of the pairwise likelihood estimates}\label{Asymptoticnormality}

In order to prove asymptotic normality of the pairwise likelihood estimates resulting from maximizing \eqref{PLfull2} we need the following results for the pairwise log-density. The proofs can be found in Appendix \ref{appendix}.
\begin{lemma}\label{finiteder}
\begin{enumerate}[(1)]
\item{The gradient of the bivariate log-density satisfies 
$$\mathbb{E}_{\bs{\psi}^*}\left[\left|\nabla_{\bs{\psi}} \log f_{\bs{\psi}}(\eta(\bs{s}_1,t_1),\eta(\bs{s}_2,t_2))\right|^{3}\right]<\infty$$}
\item{The Hessian of the pairwise log-density satisfies $$\mathbb{E}_{\bs{\psi}^*}\left[\sup\limits_{\bs{\psi}\in\bs{\Psi}}\left|\nabla^2_{\bs{\psi}} \log f_{\bs{\psi}}(\eta(\bs{s}_1,t_1),\eta(\bs{s}_2,t_2))\right|\right]< \infty.$$}
\end{enumerate}
\end{lemma}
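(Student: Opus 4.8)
The plan is to reduce both statements to one–dimensional estimates in the scalar $\delta=\delta(\bs{h},u)$ and then to integrate the resulting deterministic bounds against the laws of the $1/\eta$'s, which are standard exponential under $\bs{\psi}^*$. The starting observation is that $\log f_{\bs{\psi}}$ depends on $\bs{\psi}=(\theta_1,\alpha_1,\theta_2,\alpha_2)$ only through $\delta=\theta_1\|\bs{h}\|^{\alpha_1}+\theta_2|u|^{\alpha_2}$, so by the chain rule
\begin{equation*}
\nabla_{\bs{\psi}}\log f_{\bs{\psi}}=(\partial_\delta\log f_{\bs{\psi}})\,\nabla_{\bs{\psi}}\delta ,\qquad
\nabla^2_{\bs{\psi}}\log f_{\bs{\psi}}=(\partial^2_\delta\log f_{\bs{\psi}})\,(\nabla_{\bs{\psi}}\delta)(\nabla_{\bs{\psi}}\delta)^\top+(\partial_\delta\log f_{\bs{\psi}})\,\nabla^2_{\bs{\psi}}\delta .
\end{equation*}
For each of the finitely many lags $(\bs{h},u)\in\mathcal{H}_r\times\{1,\dots,p\}$ appearing in \eqref{PLfull2}, the entries of $\nabla_{\bs{\psi}}\delta$ and $\nabla^2_{\bs{\psi}}\delta$ (they are $\|\bs{h}\|^{\alpha_1}$, $\theta_1\|\bs{h}\|^{\alpha_1}\log\|\bs{h}\|$, $\theta_1\|\bs{h}\|^{\alpha_1}(\log\|\bs{h}\|)^2$ and the temporal analogues, with $\|\bs{h}\|,|u|\ge1$) are bounded on the compact set $\Psi$, and by \eqref{boundarydelta} $\delta$ stays in a compact subinterval $[\tilde c,\bar c]\subset(0,\infty)$. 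Hence it suffices to dominate $\sup_{\bs{\psi}\in\Psi}|\partial_\delta\log f_{\bs{\psi}}(\eta(\bs{s}_1,t_1),\eta(\bs{s}_2,t_2))|$ and $\sup_{\bs{\psi}\in\Psi}|\partial^2_\delta\log f_{\bs{\psi}}(\eta(\bs{s}_1,t_1),\eta(\bs{s}_2,t_2))|$ by a random variable having a finite third, respectively first, moment under $\bs{\psi}^*$.

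Next I would simplify the formulas of Lemma \ref{bivdenlemma} using the two identities $q_{\bs{\psi}}^{(1)}+q_{\bs{\psi}}^{(2)}=2\sqrt{\delta}$ and $x_1^{-1}\varphi(q_{\bs{\psi}}^{(1)})=x_2^{-1}\varphi(q_{\bs{\psi}}^{(2)})$ (the latter because $(q_{\bs{\psi}}^{(1)})^2-(q_{\bs{\psi}}^{(2)})^2=2\log(x_2/x_1)$): they collapse the first derivatives of $V$ to $\partial V/\partial x_i=-x_i^{-2}\Phi(q_{\bs{\psi}}^{(i)})$ and give
\begin{equation*}
A:=\frac{\partial V}{\partial x_1}\frac{\partial V}{\partial x_2}-\frac{\partial^2V}{\partial x_1\partial x_2}
=\frac{\Phi(q_{\bs{\psi}}^{(1)})\Phi(q_{\bs{\psi}}^{(2)})}{x_1^2x_2^2}+\frac{\varphi(q_{\bs{\psi}}^{(2)})}{2\sqrt{\delta}\,x_1x_2^2}
=\frac{\Phi(q_{\bs{\psi}}^{(1)})\Phi(q_{\bs{\psi}}^{(2)})}{x_1^2x_2^2}+\frac{\varphi(q_{\bs{\psi}}^{(1)})}{2\sqrt{\delta}\,x_1^2x_2},
\end{equation*}
so that $\log f_{\bs{\psi}}=-V+\log A$, where $\partial_\delta q_{\bs{\psi}}^{(i)}=\delta^{-1/2}-q_{\bs{\psi}}^{(i)}/(2\delta)$. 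Differentiating in $\delta$, both $\partial_\delta V$ and $\partial^2_\delta V$ are finite sums of expressions $x_i^{-1}\varphi(q_{\bs{\psi}}^{(i)})P(q_{\bs{\psi}}^{(i)},\delta^{-1/2})$ with $P$ polynomial; since $\sup_{z\in\bbr}|z|^m\varphi(z)<\infty$ and $\delta\ge\tilde c$, these are bounded by $C\,(1+|\log x_1|+|\log x_2|)^{N}(x_1^{-1}+x_2^{-1})$, with $C,N$ depending only on $\tilde c,\bar c$ and the design mask.

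The crucial — and hardest — step is the term $\log A$, for which $\partial_\delta\log A=\partial_\delta A/A$ and $\partial^2_\delta\log A=\partial^2_\delta A/A-(\partial_\delta A/A)^2$. The obstacle is that $A$ is exponentially small whenever $x_1/x_2$ is far from $1$, so the trivial bound $A>0$ is worthless and one needs a lower bound for $A$ of the same order as the numerators $\partial^j_\delta A$. Now $\partial^j_\delta A$ ($j=1,2$) is again a finite sum of terms of the form $x_1^{-a}x_2^{-b}\,\Phi(q_{\bs{\psi}}^{(1)})^{e_1}\Phi(q_{\bs{\psi}}^{(2)})^{e_2}\,\varphi(q_{\bs{\psi}}^{(1)})^{e_3}\varphi(q_{\bs{\psi}}^{(2)})^{e_4}\,P(q_{\bs{\psi}}^{(i)},\delta^{-1/2})$ with $e_3,e_4\in\{0,1\}$, and: a term carrying a factor $\varphi(q_{\bs{\psi}}^{(1)})$ is divided by the lower bound $A\ge\varphi(q_{\bs{\psi}}^{(1)})/(2\sqrt{\delta}\,x_1^2x_2)$ (symmetrically for $\varphi(q_{\bs{\psi}}^{(2)})$ using the other form of $A$), whence the exponentially small Gaussian factor cancels, any remaining $\varphi(q_{\bs{\psi}}^{(i)})$ being $\le(2\pi)^{-1/2}$; a term with only $\Phi$–factors is divided by $A\ge\Phi(q_{\bs{\psi}}^{(1)})\Phi(q_{\bs{\psi}}^{(2)})/(x_1^2x_2^2)$, and since $\max\{q_{\bs{\psi}}^{(1)},q_{\bs{\psi}}^{(2)}\}\ge\sqrt{\delta}\ge\sqrt{\tilde c}$ at least one of $\Phi(q_{\bs{\psi}}^{(1)}),\Phi(q_{\bs{\psi}}^{(2)})$ exceeds $\Phi(\sqrt{\tilde c})>0$, while any leftover ratio $\varphi(q_{\bs{\psi}}^{(i)})/x_i$ is replaced by $1/x_{3-i}$ via $x_1^{-1}\varphi(q_{\bs{\psi}}^{(1)})=x_2^{-1}\varphi(q_{\bs{\psi}}^{(2)})$. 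In every case the ratio is bounded by $C\,(1+|\log x_1|+|\log x_2|)^{N}(1+x_1^{-1}+x_2^{-1})$, uniformly in $\bs{\psi}\in\Psi$ since $\delta$ ranges over $[\tilde c,\bar c]$.

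Finally, under $\bs{\psi}^*$ the marginals are unit Fr\'echet, so $E_i:=1/\eta(\bs{s}_i,t_i)$ is standard exponential, whence $x_i^{-1}=E_i$, $|\log x_i|=|\log E_i|$, and $\mathbb{E}[E_i^a|\log E_i|^b]<\infty$ for all $a,b\ge0$. Applying H\"older's inequality to the (dependent) pair $(E_1,E_2)$ then makes the third power of the part-(1) bound and the part-(2) bound integrable, which proves both assertions. The one genuinely delicate point in this scheme is the lower bound on $A=V_1V_2-V_{12}$, and everything rests on the cancellation of the $\varphi(q_{\bs{\psi}}^{(i)})$'s, which in turn is an algebraic consequence of $q_{\bs{\psi}}^{(1)}+q_{\bs{\psi}}^{(2)}=2\sqrt{\delta}$ and $x_1^{-1}\varphi(q_{\bs{\psi}}^{(1)})=x_2^{-1}\varphi(q_{\bs{\psi}}^{(2)})$.
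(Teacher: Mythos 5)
Your proof is correct and follows the same overall architecture as the paper's appendix proof: reduce via the chain rule to the $\delta$-derivatives $\partial_\delta\log f_{\bs\psi}$ and $\partial^2_\delta\log f_{\bs\psi}$ (the $\nabla_{\bs\psi}\delta$ and $\nabla^2_{\bs\psi}\delta$ factors being bounded on the compact $\Psi$ for the finitely many lags in the design mask), show that every resulting term is dominated by an expression of the form $|\log x_1|^{k_1}|\log x_2|^{k_2}x_1^{-k_3}x_2^{-k_4}$, and conclude from the Fr\'echet margins (exponential $1/\eta$, Gumbel $\log\eta$) via H\"older. The one place where you genuinely improve on the paper is the treatment of the denominator $A=A_1A_2-A_3$: the paper simply asserts that $(A_1A_2-A_3)^{-1}$ can be absorbed into terms of the form \eqref{critical} with nonnegative exponents, without explaining why the possibly exponentially small Gaussian factors $\varphi(q_{\bs\psi}^{(i)})$ in the denominator do not destroy integrability. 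Your identities $q_{\bs\psi}^{(1)}+q_{\bs\psi}^{(2)}=2\sqrt{\delta}$ and $x_1^{-1}\varphi(q_{\bs\psi}^{(1)})=x_2^{-1}\varphi(q_{\bs\psi}^{(2)})$, the resulting two-term positive representation of $A$, and the observation that every term of $\partial_\delta^jA$ carries a $\varphi(q_{\bs\psi}^{(i)})$ factor that cancels against the lower bound $A\geq\varphi(q_{\bs\psi}^{(i)})/(2\sqrt{\delta}\,x_1^2x_2)$ supply exactly the justification the paper leaves implicit. (Your fallback for pure-$\Phi$ terms is not actually needed, since differentiation in $\delta$ always produces a $\varphi$ factor, and as stated it would only lower-bound one of the two $\Phi$'s; but since such terms do not occur, this does not affect correctness.)
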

Assuming asymptotic normality of the pairwise score function, then it is relatively routine to show that the pairwise likelihood estimates are asymptotically normal. To formulate the result, recall from \eqref{PLfull2} that the pairwise likelihood function can be written as 
$$PL^{(m^d,T)}(\bs{\psi}) = \sum\limits_{i_1=1}^m\cdots \sum\limits_{i_d=1}^m \sum\limits_{k=1}^T g_{\bs{\psi}}(i_1,\ldots,i_d,k;\mathcal{H}_r,p) - \mathcal{R}^{(m^d,T)}(\bs{\psi}),$$
where $g_{\bs{\psi}}$ is defined in \eqref{gpsi}. The pairwise score function is then given by 
$$\sum\limits_{i_1=1}^m\cdots \sum\limits_{i_d=1}^m \sum\limits_{k=1}^T\nabla_{\bs{\psi}} g_{\bs{\psi}}(i_1,\ldots,i_d,k;\mathcal{H}_r,p) -\nabla_{\bs{\psi}}\mathcal{R}^{(m^d,T)}(\bs{\psi}),$$
where $\nabla_{\bs{\psi}} g_{\bs{\psi}}(i_1,\ldots,i_d,k;\mathcal{H}_r,p)$ is the gradient of the function $g_{\bs{\psi}}$ with respect to $\bs{\psi}$. 
\begin{theorem}\label{normality1}
Assume that the conditions of Theorem \ref{Consistencythm} hold. In addition, assume that a central limit theorem holds for 
$\nabla_{\bs{\psi}} g_{\bs{\psi}}(i_1,\ldots,i_d,k;\mathcal{H}_r,p)$ in the following sense 
\begin{equation}
\frac{1}{m^{d/2}\sqrt{T}} \sum\limits_{i_1=1}^m\cdots\sum\limits_{i_d=1}^m \sum\limits_{k=1}^T \nabla_{\bs{\psi}}g_{\bs{\psi}^*}(i_1,\ldots,i_d,k;\mathcal{H}_r,p)  \stackrel{d}{\longrightarrow}\mathcal{N}(0,\Sigma), \ mT\to \infty,
\label{CLT1}
\end{equation}
where $\bs{\psi}^*$ is the true parameter vector and $\Sigma$ is some covariance matrix. Then it follows that the pairwise likelihood estimates $\hat{\bs{\psi}}_{m^d,T}$ satisfy
$$m^{d/2}\sqrt{T}(\bs{\hat{\psi}}_{m^d,T}-\bs{\psi}^*) \stackrel{d}{\longrightarrow} \mathcal{N}(0,F^{-1}\Sigma(F^{-1})^T), \ mT\to \infty,$$
where 
$$F = \mathbb{E}_{\bs{\psi}^*}\left[-\nabla^2_{\bs{\psi}}g_{\bs{{\psi}}^*}(1,\ldots,1,1;\mathcal{H}_r,p)\right].$$
\end{theorem}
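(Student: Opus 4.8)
The plan is to argue along the lines of the classical theory of $M$-estimators, feeding in the assumed central limit theorem \eqref{CLT1} for the score and an ergodic-theorem argument for the Hessian. Since Theorem \ref{Consistencythm} gives $\bs{\hat\psi}_{m^d,T}\stackrel{a.s.}{\to}\bs\psi^*$ and $\bs\psi^*$ lies in the interior of $\Psi$, for $mT$ large enough the estimator is an interior maximiser, so that $\nabla_{\bs\psi}PL^{(m^d,T)}(\bs{\hat\psi}_{m^d,T})=0$. A componentwise Taylor expansion of the score around $\bs\psi^*$ then yields
\begin{equation*}
0=\nabla_{\bs\psi}PL^{(m^d,T)}(\bs\psi^*)+\nabla^2_{\bs\psi}PL^{(m^d,T)}(\bar{\bs\psi})\,(\bs{\hat\psi}_{m^d,T}-\bs\psi^*),
\end{equation*}
where $\bar{\bs\psi}$ is row-dependent but in each case lies on the segment joining $\bs{\hat\psi}_{m^d,T}$ and $\bs\psi^*$, hence $\bar{\bs\psi}\stackrel{a.s.}{\to}\bs\psi^*$. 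Rearranging,
\begin{equation*}
m^{d/2}\sqrt{T}\,(\bs{\hat\psi}_{m^d,T}-\bs\psi^*)=\Big[-\tfrac{1}{m^dT}\nabla^2_{\bs\psi}PL^{(m^d,T)}(\bar{\bs\psi})\Big]^{-1}\cdot\tfrac{1}{m^{d/2}\sqrt{T}}\,\nabla_{\bs\psi}PL^{(m^d,T)}(\bs\psi^*),
\end{equation*}
and it remains to analyse the two factors.

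For the score factor, recall $\nabla_{\bs\psi}PL^{(m^d,T)}(\bs\psi^*)=\sum_{i_1,\ldots,i_d,k}\nabla_{\bs\psi}g_{\bs\psi^*}(i_1,\ldots,i_d,k;\mathcal H_r,p)-\nabla_{\bs\psi}\mathcal R^{(m^d,T)}(\bs\psi^*)$. The first sum, normalised by $m^{d/2}\sqrt T$, converges in distribution to $\mathcal N(0,\Sigma)$ by the assumed \eqref{CLT1}, Lemma \ref{finiteder}(1) providing the moment bounds that make such a limit meaningful. The boundary contribution $\nabla_{\bs\psi}\mathcal R^{(m^d,T)}(\bs\psi^*)$ involves only the $o(m^dT)$ space-time sites in $\mathcal B_{m,T}$, and exactly as in the proof of Theorem \ref{Consistencythm} — now with the integrability bound of Lemma \ref{finiteder}(1) — one shows that $\tfrac{1}{m^{d/2}\sqrt T}\nabla_{\bs\psi}\mathcal R^{(m^d,T)}(\bs\psi^*)\to 0$, so it is asymptotically negligible; by Slutsky's theorem the whole normalised score converges in distribution to $\mathcal N(0,\Sigma)$.

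For the Hessian factor, I would show that, uniformly on $\Psi$,
\begin{equation*}
-\tfrac{1}{m^dT}\nabla^2_{\bs\psi}PL^{(m^d,T)}(\bs\psi)\stackrel{a.s.}{\longrightarrow}F(\bs\psi)\coloneqq\mathbb E_{\bs\psi^*}\big[-\nabla^2_{\bs\psi}g_{\bs\psi}(1,\ldots,1,1;\mathcal H_r,p)\big],\qquad mT\to\infty .
\end{equation*}
Entrywise, the pointwise statement is the strong law of large numbers of Proposition \ref{mixing1} applied to the measurable, lag-shifted function $\nabla^2_{\bs\psi}g_{\bs\psi}$; uniformity follows, as in the consistency proof, from Lemma \ref{finiteder}(2) and continuity of $\bs\psi\mapsto\nabla^2_{\bs\psi}\log f_{\bs\psi}$ on the compact $\Psi$ (where $\delta$ is bounded away from zero), via the uniform-SLLN argument of Straumann and Mikosch \cite{Straumann1}, and $\nabla^2_{\bs\psi}\mathcal R^{(m^d,T)}$ is discarded as before. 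Since $F(\cdot)$ is continuous and $\bar{\bs\psi}\stackrel{a.s.}{\to}\bs\psi^*$, this gives $-\tfrac{1}{m^dT}\nabla^2_{\bs\psi}PL^{(m^d,T)}(\bar{\bs\psi})\stackrel{a.s.}{\to}F(\bs\psi^*)=F$. Assuming $F$ nonsingular — a standard regularity condition, which holds here through the information identity for the pairwise density together with \eqref{ident1} — the continuous mapping theorem yields convergence of the inverse, and combining with the previous paragraph,
\begin{equation*}
m^{d/2}\sqrt T\,(\bs{\hat\psi}_{m^d,T}-\bs\psi^*)\stackrel{d}{\longrightarrow}F^{-1}\mathcal N(0,\Sigma)=\mathcal N\big(0,F^{-1}\Sigma(F^{-1})^T\big).
\end{equation*}

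Given that the CLT \eqref{CLT1} is taken as a hypothesis, the main obstacle is the uniform almost-sure convergence of the normalised Hessian together with the control of the intermediate point $\bar{\bs\psi}$: this is precisely where the mixing/ergodic structure of the Brown–Resnick process (Proposition \ref{mixing1}) and the domination estimate of Lemma \ref{finiteder}(2) are indispensable, and some care is needed to confirm that the boundary terms $\mathcal R^{(m^d,T)}$ and their first and second derivatives remain negligible after the $m^{d/2}\sqrt T$, respectively $m^dT$, normalisations. A secondary, purely regularity, point is the nonsingularity of $F$.
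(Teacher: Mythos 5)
Your proposal follows essentially the same route as the paper: a Taylor expansion of the score around $\bs{\psi}^*$, the assumed CLT \eqref{CLT1} for the normalised score, a uniform strong law of large numbers for the Hessian via the mixing property and Lemma \ref{finiteder}(2), and negligibility of the boundary terms $\mathcal{R}^{(m^d,T)}$ under the respective normalisations. Your additional remarks on the interior-maximiser condition, the row-dependence of the intermediate point, and the nonsingularity of $F$ are sensible refinements of points the paper leaves implicit, but they do not change the argument.
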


\begin{proof}
We use a standard Taylor expansion of the pairwise score function around the true parameter vector:
\begin{align*}
m^{d/2}\sqrt{T}(\bs{\hat{\psi}}_{m^d,T}- \bs{\psi}^*) =& -\left(\frac{1}{m^dT}\nabla_{\bs{\psi}}^2PL^{(m^d,T )}(\bs{\tilde{\psi}})\right)^{-1}\left(\frac{1}{m^{d/2}\sqrt{T}}\nabla_{\bs{\psi}}PL^{(m^d,T)}(\bs{\psi}^*)\right)\\
=& -\left(\frac{1}{m^dT}\sum\limits_{i_1=1}^m\cdots\sum\limits_{i_d=1}^m\sum\limits_{k=1}^T  \nabla_{\bs{\psi}}^2g_{\bs{\tilde{\psi}}}(i_1,\ldots,i_d,k;\mathcal{H}_r,p)-\nabla^2_{\bs{\psi}}\mathcal{R}^{(m^d,T)}(\bs{\tilde{\psi}})\right)^{-1} \\
&\times \left(\frac{1}{m^{d/2}\sqrt{T}}\sum\limits_{i_1=1}^m\cdots\sum\limits_{i_2=1}^m\sum\limits_{k=1}^T \nabla_{\bs{\psi}}g_{\bs{\psi}^*}(i_1,\ldots,i_d,k;\mathcal{H}_r,p)-\nabla_{\bs{\psi}}\mathcal{R}^{(m^d,T)}(\bs{\psi}^*)\right),
\end{align*}
where $\tilde{\bs{\psi}} \in [\bs{\hat{\psi}}_{m^d,T},\bs{\psi}^*]$. 
For now, we ignore the boundary term and analyze the first terms in the outer brackets.
By \eqref{CLT1} the second term converges to a normal distribution with mean $0$ and covariance matrix $\Sigma$. 
For the first part we use the same arguments as in the consistency proof and show a strong law of large numbers. Since the underlying space-time process in the likelihood function is mixing, it follows that the process \\ $\left\{\nabla^2_{\bs{\psi}} g_{\bs{\psi}}(i_1,\ldots,i_d,k;\mathcal{H}_r,p), \bs{s}_{(i_1,\ldots,i_d)}\in \bbz^d,t_k \in \bbz\right\}$ is mixing as a measurable function of mixing and lagged processes. To prove the uniform convergence we need to verify that
$$\mathbb{E}_{\bs{\psi}^*}\left[\sup\limits_{\bs{\psi}\in \bs{\Psi}}\left|\nabla_{\bs{\psi}}^2 g_{\bs{\psi}}(1,\ldots,1,1;\mathcal{H}_r,p)\right|\right] < \infty.$$
This follows immediately from Lemma \ref{finiteder}. 
Putting this together with the fact that $\tilde{\bs{\psi}} \in [\bs{\hat{\psi}}_{m^d,T},\bs{\psi}^*]$, and because of the strong consistency, it follows that
\begin{align*}
\frac{1}{m^dT}\sum\limits_{i_1=1}^m\cdots\sum\limits_{i_d=1}^m\sum\limits_{k=1}^T  \nabla_{\bs{\psi}}^2g_{\bs{\tilde{\psi}}}(i_1,\ldots,i_d,k;\mathcal{H}_r,p) \stackrel{a.s.}{\longrightarrow} \mathbb{E}_{\bs{\psi}*}\left[\nabla^2_{\bs{\psi}}g_{\bs{\psi}^*}(1,\ldots,1,1;\mathcal{H}_r,p)\right] \eqqcolon -F.
\end{align*}
For the boundary term $\mathcal{R}^{(m^d,T)}$ observe that it can be written as
$$\mathcal{R}^{(m^d,T)}(\bs{\psi}) = \sum\limits_{(i_1,\ldots,i_d,k)\in \mathcal{B}_{m,T}}g_{\bs{\psi}}(i_1,\ldots,i_d,k;\mathcal{H}_r,p).$$
Using assumption \eqref{CLT1} together with the strong law of large numbers for $\left\{\nabla^2_{\bs{\psi}}g_{\bs{\psi}}(i_1,\ldots,i_d,k;\mathcal{H}_r,p)\right\}$ it follows in the same way as in the proof of Theorem \ref{Consistencythm} that
\begin{align*}
&\frac{1}{m^dT}\nabla^2_{\bs{\psi}} \mathcal{R}^{(m^d,T)} \stackrel{a.s}{\longrightarrow} 0, \ \ \text{and} \ \
\frac{1}{m^{d/2}\sqrt{T}} \nabla_{\bs{\psi}} \mathcal{R}^{(m^d,T)} \stackrel{P}{\longrightarrow} 0. 
\end{align*}
Combining these results, we obtain
$$m^{d/2}\sqrt{T}(\bs{\hat{\psi}}_{m^d,T}-\bs{\psi}^*) \stackrel{d}{\longrightarrow} \mathcal{N}(0,F^{-1}\Sigma (F^{-1})^T), \ \ mT\to \infty.$$
\end{proof}

In the next section we provide a sufficient condition for \eqref{CLT1}.

\subsection{Asymptotic normality and $\alpha$-mixing}
In this section we consider asymptotic normality of the parameters estimates for the process in  Proposition \ref{BrownResnick}. Under the assumption of $\alpha$-mixing on the random field the key is to show asymptotic normality for the score function of the pairwise likelihood. For an increasing time domain and fixed number of locations asymptotic normality of the pairwise likelihood estimates was shown in Huser and Davison \cite{Huser}. 
The main difference between a temporal setting and a space-time setting is the definition of the $\alpha$-mixing coefficients and the resulting assumptions to obtain a central limit theorem for the score function. 

We apply the central limit theorem for random fields established in Bolthausen \cite{Bolthausen} to the score function of the pairwise likelihood in our model. In a second step we verify the $\alpha$-mixing conditions for the max-stable process introduced in Section \ref{Desmodel}.  
First, we define the $\alpha$-mixing coefficients in a space-time setting as follows. 
Define the distances 
\begin{align*}
d((\bs{s}_1,t_1),(\bs{s}_2,t_2)) &= \max\left\{\max\limits_{1\leq i\leq d} |\bs{s}_1(i)-\bs{s}_2(i)|,|t_1-t_2|\right\} , \quad \bs{s}_1,\bs{s}_2 \in \bbz^d, t_1,t_2 \in \bbn\\
d(\Lambda_1,\Lambda_2) &= \inf\left\{d((\bs{s}_1,t_1),(\bs{s}_2,t_2)), (\bs{s}_1,t_1) \in \Lambda_1, (\bs{s}_2,t_2) \in \Lambda_2\right\}, \quad \Lambda_1,\Lambda_2 \subset \bbz^{d}\times \bbn.
\end{align*}
Let further $\mathcal{F}_{\Lambda_i}= \sigma\left\{\eta(\bs{s},t), (\bs{s},t)\in \Lambda_i\right\}$ for $i=1,2$. 
The mixing coefficients are defined for $k,l,n\geq 0$ by 
\begin{equation}\label{alphaBolt}
\alpha_{k,l}(n) = \sup\left\{\left|P(A_1\cap A_2) - P(A_1)P(A_2)\right|: \ A_i \in \mathcal{F}_{\Lambda_i}, |\Lambda_1|\leq k, |\Lambda_2|\leq l, d(\Lambda_1,\Lambda_2) \geq n\right\}
\end{equation}
and depend on the sizes and the distance of the sets $\Lambda_1$ and $\Lambda_2$.  
A space-time process is called $\alpha$-mixing, if $\alpha_{k,l}(n) \to 0$ as $n\to\infty$ for all $k,l\geq 0$.
We assume that the process $\left\{\eta(\bs{s},t), (\bs{s},t)\in \bbz^{d}\times \bbn\right\}$ is $\alpha$-mixing with mixing coefficients defined in \eqref{alphaBolt}, from which it follows that the score process 
\begin{equation}
\left\{\nabla_{\bs{\psi}}g_{\psi}(i_1,\ldots,i_d,k;\mathcal{H}_r,p), (\bs{s}_{(i_1,\ldots,i_d)},t_k) \in \bbz^{d}\times \bbn\right\}.
\label{spaceprocess}
\end{equation}
is $\alpha$-mixing.
We apply Bolthausen's central limit theorem to the process in \eqref{spaceprocess}. By adjusting the assumptions on the $\alpha$-mixing coefficients to the score process, we obtain the following proposition. 

\begin{proposition}\label{Bolthausen2}
Assume, that the following conditions hold: 
\begin{enumerate}[(1)]
\item{The process $\left\{(\eta(\bs{s},t), (\bs{s},t)\in\bbz^{d}\times \bbn\right\}$ is strongly mixing with mixing coefficients $\alpha_{k,l}(n)$ as in \eqref{alphaBolt}.} 
\item{$\sum\limits_{n=1}^{\infty}n^{d}\alpha_{k,l}(n) <\infty \text{ for }\ k+l\leq 4(|\mathcal{H}_r|+1)(p+1)$ and   $\alpha_{(|\mathcal{H}_r|+1)(p+1),\infty}(n) = o(n^{-(d+1)})$.}
\item{There exists $\beta>0$ such that 
\begin{align*}
&\mathbb{E}\left[\left|\nabla_{\bs{\psi}}g_{\bs{\psi}^*}((i_1,\ldots,i_d),k,\mathcal{H}_r,p)\right|^{2+\beta}\right] <\infty  \ \ \text{ and }\\
& \sum\limits_{n=1}^{\infty}n^{d}\alpha_{(|\mathcal{H}_r|+1)(p+1),(|\mathcal{H}_r|+1)(p+1)}(n)^{\beta/(2+\beta)} < \infty.
\end{align*}}
\end{enumerate}
Then,
$$\frac{1}{ m^{d/2}\sqrt{T}} \sum\limits_{i_1=1}^m\cdots\sum\limits_{i_d=1}^m \sum\limits_{k=1}^T \nabla_{\bs{\psi}}g_{\bs{\psi}^*}(i_1,\ldots,i_d,k;\mathcal{H}_r,0) \stackrel{d}{\to} \mathcal{N}(0,\Sigma), \ mT\to \infty, $$
where $\Sigma = \sum\limits_{\bs{s}_{(i_1,\ldots,i_d)}\in \bbz^{d}}\sum\limits_{t_k\in\bbn}\mathbb{C}ov\left( \nabla_{\bs{\psi}}g_{\bs{\psi}^*}(1,\ldots,1,1;\mathcal{H}_r,p), \nabla_{\bs{\psi}}g_{\bs{\psi}^*}(i_1,\ldots,i_d,k;\mathcal{H}_r,p)\right)$.
\end{proposition}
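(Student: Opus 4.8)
The plan is to read the left-hand side of the asserted convergence as a normalized partial sum, over the rectangle $[1,m]^d\times[1,T]$, of the stationary random field
$$X_{(i_1,\ldots,i_d,k)}\coloneqq\nabla_{\bs\psi}g_{\bs\psi^*}(i_1,\ldots,i_d,k;\mathcal H_r,p),\qquad (\bs s_{(i_1,\ldots,i_d)},t_k)\in\bbz^d\times\bbn,$$
viewed as a field indexed by $\bbz^{d+1}$ (the time axis supplying the extra coordinate), and then to check for $X$ the hypotheses of Bolthausen's central limit theorem \cite{Bolthausen} in ambient dimension $d+1$: strict stationarity; $\mathbb E[X_0]=\bs 0$; $\mathbb E|X_0|^{2+\beta}<\infty$ for some $\beta>0$; and, writing $\tilde\alpha_{k,l}$ for the mixing coefficients of $X$, the rate conditions $\sum_{n\ge1}n^{d}\tilde\alpha_{k,l}(n)<\infty$ for $k+l\le4$, $\tilde\alpha_{1,\infty}(n)=o(n^{-(d+1)})$, and $\sum_{n\ge1}n^{d}\tilde\alpha_{1,1}(n)^{\beta/(2+\beta)}<\infty$. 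Bolthausen's theorem, in its (standard) extension from cubes to rectangles with $mT\to\infty$, then gives that $\Sigma$ is absolutely summable and that the normalized sum converges to $\mathcal N(\bs 0,\Sigma)$.

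Three of these ingredients are immediate. Strict stationarity of $X$ follows because $\eta$ is strictly stationary by Proposition \ref{BrownResnick}, hence so is its restriction to $\bbz^d\times\bbn$, and $X$ is a fixed measurable function of finitely many shifts of $\eta$ via \eqref{gpsi}. The mean-zero property is the score identity: each summand $\nabla_{\bs\psi}\log f_{\bs\psi^*}(\eta(\cdot),\eta(\cdot))$ integrates to $\bs 0$ under $\mathbb P_{\bs\psi^*}$, differentiating $\int f_{\bs\psi}=1$ under the integral sign being legitimate by the domination furnished by Lemma \ref{finiteder}(2) on the compact set $\Psi$. The moment bound $\mathbb E|X_0|^{2+\beta}<\infty$ is precisely hypothesis (3); for $\beta=1$ it is already contained in Lemma \ref{finiteder}(1) together with the subadditivity of the $L^{2+\beta}$-norm over the finite sum defining $g_{\bs\psi}$.

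The substantial step is to transfer the $\alpha$-mixing of $\eta$ to $X$. Here $X_{(i_1,\ldots,i_d,k)}$ is measurable with respect to $\sigma\{\eta(\bs s,t):(\bs s,t)\in N(i_1,\ldots,i_d,k)\}$, where $N(i_1,\ldots,i_d,k)$ consists of $(\bs s_{(i_1,\ldots,i_d)},t_k)$ together with the points $(\bs s_{(i_1,\ldots,i_d)}+\bs h,t_l)$ for $\bs h\in\mathcal H_r$ and $l=k+1,\ldots,k+p$; hence $|N(\cdot)|\le(|\mathcal H_r|+1)(p+1)$ and $N(\cdot)$ has $d$-diameter at most $\max\{r,p\}$. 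For index sets $\Lambda_1,\Lambda_2\subset\bbz^{d+1}$ with $|\Lambda_i|\le k_i$ and $d(\Lambda_1,\Lambda_2)\ge n$, the enlarged sets $\Lambda_i^\ast=\bigcup_{\lambda\in\Lambda_i}N(\lambda)$ satisfy $|\Lambda_i^\ast|\le(|\mathcal H_r|+1)(p+1)k_i$ and $d(\Lambda_1^\ast,\Lambda_2^\ast)\ge n-2\max\{r,p\}$, so that
$$\tilde\alpha_{k_1,k_2}(n)\ \le\ \alpha_{(|\mathcal H_r|+1)(p+1)k_1,\,(|\mathcal H_r|+1)(p+1)k_2}\big(n-2\max\{r,p\}\big).$$
Specializing to $k_1+k_2\le4$ converts hypothesis (2) into the first two rate conditions for $X$ and hypothesis (3) into the third; the powers of $n$ match exactly because the effective dimension is $d+1$, and the multiplicative factor $(|\mathcal H_r|+1)(p+1)$ and the bound $4(|\mathcal H_r|+1)(p+1)$ in hypothesis (2) are precisely what this enlargement consumes. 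Finally, since $X$ is $\bbr^4$-valued, I would apply the scalar Bolthausen theorem to each linear combination $\bs\lambda^\top X$ --- itself strictly stationary and mean zero, with mixing coefficients no larger than those of $X$ (being a pointwise function of it) and with $\mathbb E|\bs\lambda^\top X_0|^{2+\beta}\le\|\bs\lambda\|^{2+\beta}\mathbb E|X_0|^{2+\beta}<\infty$ --- and conclude by the Cram\'er--Wold device, reading $\Sigma$ off from the scalar variances; its absolute convergence is part of Bolthausen's statement (alternatively it follows from Davydov's covariance inequality and hypothesis (3)), and the degenerate case $\bs\lambda^\top\Sigma\bs\lambda=0$ needs nothing extra.

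I expect the main obstacle to be exactly this bookkeeping step: tracking the cardinality blow-up $(|\mathcal H_r|+1)(p+1)$, the additive shift $2\max\{r,p\}$ in the distance argument, and --- most easily overlooked --- making sure that after this translation the exponents $n^{d}$ and $n^{-(d+1)}$ appearing in hypotheses (1)--(3) are exactly those demanded by Bolthausen's theorem in ambient dimension $d+1$ rather than $d$. Everything else (stationarity, the vanishing score mean, the $(2+\beta)$-moment bound, and summability of the covariances) is either assumed outright in the proposition or already supplied by Lemma \ref{finiteder}.
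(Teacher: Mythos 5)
Your proposal follows the same route the paper takes: the paper offers no written proof of this proposition, merely asserting that it follows from Bolthausen's central limit theorem applied to the score field once the mixing coefficients are ``adjusted,'' and your neighborhood-enlargement bound $\tilde\alpha_{k_1,k_2}(n)\le\alpha_{(|\mathcal H_r|+1)(p+1)k_1,\,(|\mathcal H_r|+1)(p+1)k_2}(n-2\max\{r,p\})$ is precisely the adjustment that turns Bolthausen's hypotheses in ambient dimension $d+1$ into the stated conditions (1)--(3). The remaining ingredients you supply --- strict stationarity, the vanishing score mean, the $(2+\beta)$-moment from Lemma \ref{finiteder}, and Cram\'er--Wold for the vector-valued case --- are exactly what the paper relies on implicitly, so your write-up is a correct and more detailed account of the same argument.
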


In a second step, we want to analyze the strong mixing property and the related assumptions for our model. 
Recent work by Dombry and Eyi-Minko \cite{Dombry} deals with strong mixing properties for max-stable random fields. By using point process representation of max-stable processes together with coupling techniques, they showed that the $\alpha$-mixing coefficients can be bounded by a function of the tail dependence coefficient. A direct extension to the space-time setting gives the following lemma. 

\begin{lemma}[Dombry and Eyi-Minko \cite{Dombry}, Corollary 2.2]
Consider the stationary max-stable space-time process $\left\{\eta(\bs{s},t), (\bs{s},t)\in \bbz^d\times \bbn\right\}$ with tail dependence coefficient $\chi(\bs{h},u)$. 
The $\alpha$-mixing coefficients in \eqref{alphaBolt} satisfy
$$\alpha_{k,l}(n) \leq kl\sup\limits_{\max\left\{\|\bs{h}\|,|u|\right\} \geq n}\chi(\bs{h},u) \ \  \text{ and }\ \ \alpha_{k,\infty}(n) \leq k\sum\limits_{\max\left\{\|\bs{h}\|,|u|\right\}\geq n}\chi(\bs{h},u).$$
\end{lemma}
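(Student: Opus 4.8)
The plan is to deduce the statement from Corollary 2.2 of Dombry and Eyi-Minko \cite{Dombry} after checking that the restriction of $\left\{\eta(\bs{s},t)\right\}$ to the lattice $\bbz^d\times\bbn$ is a stationary max-stable random field indexed by a subset of $\bbr^{d+1}$, so that their coupling machinery applies verbatim. First I would record the spectral form from Proposition \ref{BrownResnick}: $\eta(\bs{s},t)=\bigvee_{j\geq1}\xi_j\exp\{W_j(\bs{s},t)-\delta(\bs{s},t)\}$, which is of the type $\bigvee_{j\geq1}\xi_j Y_j(\bs{s},t)$ with $\{\xi_j\}$ the atoms of a Poisson random measure on $(0,\infty)$ of intensity $\xi^{-2}\,d\xi$ and $Y_j$ i.i.d.\ copies of a nonnegative process of unit mean, independent of $\{\xi_j\}$. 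Almost surely the maximum at each $(\bs{s},t)$ is attained by a unique atom, the \emph{extremal function} $\varphi_{(\bs{s},t)}$ at that point.

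The core input is the coupling bound of Dombry and Eyi-Minko (their Theorem 2.1): for disjoint finite index sets $\Lambda_1,\Lambda_2$, let $D$ be the event that some atom is an extremal function for a point of $\Lambda_1$ and, simultaneously, for a point of $\Lambda_2$. Splitting the Poisson random measure, one constructs a coupling in which $(\eta|_{\Lambda_1},\eta|_{\Lambda_2})$ agrees, on $D^c$, with an independent pair having the correct marginal laws; writing out $|\mathbb{P}(A\cap B)-\mathbb{P}(A)\mathbb{P}(B)|$ for $A\in\mathcal{F}_{\Lambda_1}$, $B\in\mathcal{F}_{\Lambda_2}$ and using that the two pairs coincide off $D$ yields $\alpha(\mathcal{F}_{\Lambda_1},\mathcal{F}_{\Lambda_2})\leq\mathbb{P}(D)$. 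On $D$ there are $(\bs{s}_1,t_1)\in\Lambda_1$ and $(\bs{s}_2,t_2)\in\Lambda_2$ with $\varphi_{(\bs{s}_1,t_1)}=\varphi_{(\bs{s}_2,t_2)}$, so a union bound gives
\[
\alpha(\mathcal{F}_{\Lambda_1},\mathcal{F}_{\Lambda_2})\leq\mathbb{P}(D)\leq\sum_{(\bs{s}_1,t_1)\in\Lambda_1}\sum_{(\bs{s}_2,t_2)\in\Lambda_2}\mathbb{P}\bigl(\varphi_{(\bs{s}_1,t_1)}=\varphi_{(\bs{s}_2,t_2)}\bigr).
\]

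Next I would identify the pairwise concurrence probability with $\chi$. For two sites with spatial lag $\bs{h}$ and temporal lag $u$, the probability that their maxima are realised by the same atom equals $2-\theta(\bs{h},u)$, where $\theta$ is the pairwise extremal coefficient defined through $\mathbb{P}(\eta(\bs{s}_1,t_1)\leq x,\eta(\bs{s}_2,t_2)\leq x)=\exp(-\theta(\bs{h},u)/x)$. Setting $x_1=x_2=x$ in \eqref{bivhuesler} gives $\theta(\bs{h},u)=2\Phi(\sqrt{\delta(\bs{h},u)})$, hence $2-\theta(\bs{h},u)=2\bigl(1-\Phi(\sqrt{\delta(\bs{h},u)})\bigr)=\chi(\bs{h},u)$ by \eqref{chi}. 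Finally, if $d(\Lambda_1,\Lambda_2)\geq n$ then every pair occurring above has lag $(\bs{h},u)$ with $\max\{\|\bs{h}\|,|u|\}\geq n$; bounding each term of the double sum by the supremum of $\chi$ over that region and using $|\Lambda_1|\leq k$, $|\Lambda_2|\leq l$ yields the first inequality, and for $l=\infty$, using stationarity to replace the inner sum over all admissible $(\bs{s}_2,t_2)$ by the translation-invariant sum $\sum_{\max\{\|\bs{h}\|,|u|\}\geq n}\chi(\bs{h},u)$ yields the second.

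The main obstacle is not in any of these bookkeeping steps but in the coupling behind $\alpha(\mathcal{F}_{\Lambda_1},\mathcal{F}_{\Lambda_2})\leq\mathbb{P}(D)$, which rests on the fine description of the Poisson representation and its extremal functions (and, for the second bound, on allowing $\Lambda_2$ infinite). I would not reprove this but cite Dombry and Eyi-Minko \cite{Dombry}, since the space-time process restricted to $\bbz^d\times\bbn$ satisfies their hypotheses; the only genuinely new point in the extension is the cosmetic one of recording that the metric $d$ and the tail dependence coefficient $\chi$ of the space-time model play exactly the roles of the Euclidean metric and $2-\theta$ in their statement.
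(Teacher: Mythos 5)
Your proposal is correct and follows essentially the same route as the paper, which gives no proof of its own but simply invokes Dombry and Eyi-Minko's Corollary 2.2 as a ``direct extension to the space-time setting'': you defer the coupling bound $\alpha(\mathcal{F}_{\Lambda_1},\mathcal{F}_{\Lambda_2})\leq\mathbb{P}(D)$ to that reference exactly as the paper does, and correctly supply the two pieces the paper leaves implicit, namely the identification $\mathbb{P}(\varphi_{(\bs{s}_1,t_1)}=\varphi_{(\bs{s}_2,t_2)})=2-\theta(\bs{h},u)=\chi(\bs{h},u)$ via \eqref{bivhuesler} and \eqref{chi}, and the counting/stationarity step that turns the double sum into $kl\sup\chi$ and $k\sum\chi$. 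No gaps.
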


For the model described in Proposition \ref{BrownResnick} with tail dependence coefficient $\chi$ in \eqref{chi}, it follows by using the inequality for the normal tail probability $\overline{\Phi}(x)\leq e^{-x^2/2}$ that
\begin{align*}
\alpha_{k,l}(n) & \leq 4kl\sup\limits_{\max\left\{\|\bs{h}\|,|u|\right\} \geq n} (1-\Phi(\sqrt{\delta(\bs{h},u)})) \leq 4kl\sup\limits_{\max\left\{\|\bs{h}\|,|u|\right\} \geq n}\exp\left\{-\frac{\delta(\bs{h},u)}{2}\right\}\\
& = 4kl\sup\limits_{\max\left\{\|\bs{h}\|,|u|\right\} \geq n}\exp\left\{-\frac{1}{2}(\theta_1\|\bs{h}\|^{\alpha_1}+\theta_2|u|^{\alpha_2})\right\}\\
&\leq 4kl \sup\limits_{\max\left\{\|\bs{h}\|,|u|\right\}\geq n} \exp\left\{-\frac{1}{2}\min\left\{\theta_1,\theta_2\right\}(\max\left\{\|\bs{h}\|,|u|\right\})^{\min\left\{\alpha_1,\alpha_2\right\}}\right\}.  
\end{align*}
For $n\to \infty$, this tends to zero for all $k,l\geq 0$. Thus, $\left\{\eta(\bs{s},t),(\bs{s},t)\in \bbz^{d}\times \bbn\right\}$ is strongly mixing. 
This shows the first assertion in Corollary \ref{Bolthausen2}. 
Furthermore, for $k+l\leq 4(|\mathcal{H}_r|+1)(p+1)$ the coefficients satisfy 
\begin{align*}
\sum\limits_{n=1}^{\infty}n^{d}\alpha_{k,l}(n) & \leq 4kl\sum\limits_{n=1}^{\infty} n^{d}\sup\limits_{\max\left\{\|\bs{h}\|,|u|\right\}\geq n}\exp\left\{-\frac{1}{2}(\theta_1\|\bs{h}\|^{\alpha_1}+\theta_2|u|^{\alpha_2})\right\}\\
& \leq 4kl\sum\limits_{n=1}^{\infty}n^{d}\exp\left\{-\frac{1}{2}\min\left\{\theta_1,\theta_2\right\}n^{\min\left\{\alpha_1,\alpha_2\right\}}\right\}<\infty.
\end{align*}
In addition, 
\begin{align*}
n^{d+1}\alpha_{(|\mathcal{H}_r|+1)(p+1),\infty}(n)
& \leq n^{d+1} (|\mathcal{H}_r|+1)(p+1)\sum\limits_{x\geq n}\exp\left\{-\frac{1}{2}\min\left\{\theta_1,\theta_2\right\}x^{\min\left\{\alpha_1,\alpha_2\right\}}\right\},
\end{align*}
which proves (2) in Proposition \ref{Bolthausen2}. 
As for (3), from Lemma \ref{finiteder} and using $\beta=1$ we know that
$$\mathbb{E}\left[\left|\nabla_{\bs{\psi}}g_{\bs{\psi}^*}((i_1,\ldots,i_d),k,\mathcal{H}_r,p)\right|^{(2+\beta)}\right] <\infty.$$
Using the same arguments as above, it is easy to see that the second condition in (3) holds.  
By combining the above results with Theorem \ref{normality1} we obtain asymptotic normality for the parameter estimates $\bs{\hat{\psi}}_{m^d,T}$ for an increasing number of space-time locations. 
\begin{theorem}
Assume that the conditions of Theorem \ref{Consistencythm} hold. Then, 
$$(m^d T)^{1/2}(\hat{\bs{\psi}}_{m^d,T}-\bs{\psi}^*)\stackrel{d}{\to} \mathcal{N}(0,F^{-1}\Sigma (F^{-1})^\top), \ \ mT\to \infty,$$
with 
$$F = \mathbb{E}_{\bs{\psi}^*}\left[-\nabla^2_{\bs{\psi}}g_{\bs{{\psi}}^*}(1,\ldots,1,1;\mathcal{H}_r,p)\right]$$
and 
$$\Sigma = \sum\limits_{\bs{s}_{(i_1,\ldots,i_d)}\in \bbz^d}\sum\limits_{t_k\in\bbn}
\mathbb{C}ov\left( \nabla_{\bs{\psi}}g_{\bs{\psi}^*}(1,\ldots,1,1;\mathcal{H}_r,p), \nabla_{\bs{\psi}}g_{\bs{\psi}^*}(i_1,\ldots,i_d,k;\mathcal{H}_r,p)\right).$$
\end{theorem}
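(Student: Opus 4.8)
The plan is to derive the theorem by combining Theorem~\ref{normality1}, Proposition~\ref{Bolthausen2}, and the mixing estimates for the Brown--Resnick model established above. Theorem~\ref{normality1} already reduces the whole problem to verifying the central limit theorem~\eqref{CLT1} for the rescaled pairwise score: once \eqref{CLT1} holds with some limiting covariance $\Sigma$, Theorem~\ref{normality1} gives $m^{d/2}\sqrt{T}(\bs{\hat{\psi}}_{m^d,T}-\bs{\psi}^*)\std\mathcal{N}(0,F^{-1}\Sigma(F^{-1})^\top)$ with $F=\mathbb{E}_{\bs{\psi}^*}[-\nabla_{\bs{\psi}}^2 g_{\bs{\psi}^*}(1,\ldots,1,1;\mathcal{H}_r,p)]$, and since $(m^dT)^{1/2}=m^{d/2}\sqrt{T}$ this is precisely the assertion; the boundary term $\mathcal{R}^{(m^d,T)}$ has already been shown there to be negligible at this scale. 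Hence the only thing left to prove is \eqref{CLT1}.

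To establish \eqref{CLT1} I would invoke Proposition~\ref{Bolthausen2}, i.e.\ Bolthausen's random-field central limit theorem transported to the score process in \eqref{spaceprocess}, and verify its three hypotheses for the model of Proposition~\ref{BrownResnick}. The key quantitative input is that, by the bound of Dombry and Eyi-Minko $\alpha_{k,l}(n)\le kl\sup_{\max\{\|\bs{h}\|,|u|\}\ge n}\chi(\bs{h},u)$, the closed form~\eqref{chi} for $\chi$, and the Gaussian tail estimate $\overline{\Phi}(x)\le e^{-x^2/2}$, the mixing coefficients of $\eta$ satisfy
\[
\alpha_{k,l}(n)\ \le\ 4kl\,\exp\Bigl\{-\tfrac12\min\{\theta_1,\theta_2\}\,n^{\min\{\alpha_1,\alpha_2\}}\Bigr\},
\]
with the constant in the exponent uniform over the compact set $\Psi\subseteq\{\min\{\theta_1,\theta_2\}>c,\ \alpha_1,\alpha_2\in(0,2]\}$; this is exactly the computation displayed above. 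Hypothesis~(1) (strong mixing of $\eta$) is then immediate. For hypothesis~(2), since $n^{d}\exp(-cn^{\alpha})$ is summable for any $c,\alpha>0$ we get $\sum_{n}n^{d}\alpha_{k,l}(n)<\infty$ for every fixed $k,l$, in particular for $k+l\le 4(|\mathcal{H}_r|+1)(p+1)$, and the corresponding bound for $\alpha_{k,\infty}$ forces $n^{d+1}\alpha_{(|\mathcal{H}_r|+1)(p+1),\infty}(n)\to 0$, which is the required $o(n^{-(d+1)})$ rate. For hypothesis~(3), the moment bound $\mathbb{E}_{\bs{\psi}^*}[|\nabla_{\bs{\psi}}g_{\bs{\psi}^*}(i_1,\ldots,i_d,k;\mathcal{H}_r,p)|^{2+\beta}]<\infty$ holds with $\beta=1$: since $g_{\bs{\psi}}$ is a finite sum of terms $\log f_{\bs{\psi}}$, Minkowski's inequality reduces this to the finiteness of the third moment of $\nabla_{\bs{\psi}}\log f_{\bs{\psi}}$ furnished by Lemma~\ref{finiteder}(1), while $\sum_{n}n^{d}\alpha_{(|\mathcal{H}_r|+1)(p+1),(|\mathcal{H}_r|+1)(p+1)}(n)^{1/3}<\infty$ by the same summability. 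Proposition~\ref{Bolthausen2} then delivers \eqref{CLT1} with exactly the covariance $\Sigma$ stated in the theorem.

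The step requiring the most care --- and the reason Proposition~\ref{Bolthausen2} is phrased with the factor $(|\mathcal{H}_r|+1)(p+1)$ --- is that $g_{\bs{\psi}}(i_1,\ldots,i_d,k;\mathcal{H}_r,p)$ is not a functional of $\eta$ at a single space-time site but of its restriction to a block of at most $(|\mathcal{H}_r|+1)(p+1)$ sites; consequently the mixing coefficients of the score field in \eqref{spaceprocess} are controlled by those of $\eta$ with the set cardinalities inflated by that block size, and one must carry this inflation through when feeding the estimates above into Bolthausen's conditions. Everything else --- the reduction via Theorem~\ref{normality1}, the identification of $F$ and $\Sigma$, and the negligibility of the boundary term --- is bookkeeping once \eqref{CLT1} is in hand.
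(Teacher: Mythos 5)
Your proposal is correct and follows essentially the same route as the paper: reduce to the score CLT \eqref{CLT1} via Theorem \ref{normality1}, then verify the three hypotheses of Proposition \ref{Bolthausen2} using the Dombry--Eyi-Minko bound on $\alpha_{k,l}(n)$, the explicit tail dependence coefficient \eqref{chi} with the Gaussian tail estimate, and Lemma \ref{finiteder} with $\beta=1$ for the moment condition. Your closing remark about inflating the set cardinalities by the block size $(|\mathcal{H}_r|+1)(p+1)$ when passing from $\eta$ to the score field is exactly the point the paper builds into the statement of Proposition \ref{Bolthausen2}.
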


\section{Simulation study}\label{Simulation}
We illustrate the small sample behaviour of the pairwise likelihood estimation for spatial dimension $d=2$ in a simulation experiment. 
The setup for this study is: 
\begin{enumerate}
\item{The spatial locations consisted of a $10\times 10$ grid 
$$S = \left\{\bs{s}_{(i_1,i_2)}=(i_1,i_2), i_1,i_2 \in \left\{1,\ldots,10\right\}\right\}.$$ The time points are chosen equidistantly, $1<\cdots T=100$. 
}
\item{One hundred independent Gaussian space-time processes $Z_j(s_n\bs{s},t_nt), j=1,\ldots,100$ were generated using the R-package \verb RandomFields ~with covariance function $\rho(s_n\bs{h},t_nu)$. 
We use the following correlation function for the underlying Gaussian random field. 
$$\rho(\bs{h},u) = (1+\theta_1\|\bs{h}\|^{\alpha_1} + \theta_2|u|^{\alpha_2})^{-3/2}.$$
Assumption \ref{ass1} is fullfilled and the limit function $\delta$ is given by 
$$\lim\limits_{n\to\infty} \log n(1-\rho(s_n\bs{h},t_nu)) = \delta(\bs{h},u) = \frac{3}{2}\theta_1\|\bs{h}\|^{\alpha_1} + \frac{3}{2}\theta_2|u|^{\alpha_2}. $$}
\item{The simulated processes were transformed to standard Fr\'echet margins using the transformation 
$-1/\log(\Phi(Z_j(\bs{s},t)))$ for $\bs{s}\in S$ and $t\in \left\{t_1,\ldots,t_T\right\}$.}
\item{The pointwise maximum of the transformed Gaussian random fields was computed and rescaled by $1/n$ to obtain an approximation of a max-stable random field, i.e.
$$\eta(\bs{s},t) = \frac{1}{100}\bigvee\limits_{j=1}^{100} -\frac{1}{\log\left(\Phi(Z_j(s_n\bs{s},t_nt))\right)}, \ \bs{s}\in S, t\in \left\{t_1,\ldots,t_T\right\}.$$}
\item{The parameters $\theta_1,\alpha_1,\theta_2$ and $\alpha_2$ for different combinations of maximal space-time lags $(r,p)$ were estimated by maximizing \eqref{PLfull2}. The program is adjusted such that it takes care of identifiability issues, when some of the parameters are not identifiable, cf. Remark \ref{Identifiability}.}
\item{Steps 1. - 5. are repeated 100 times.}
\end{enumerate}

Figures \ref{Sim1} and \ref{Sim2} show the resulting estimates as a function of $(r,p)$, where the true parameter set is given by $\bs{\psi}^* = (\theta_1^*,\alpha_1^*,\theta_2^*,\alpha_2^*) =(0.06,1,0.04,1)$.
Figure \ref{Sim1} shows the resulting estimates for the spatial parameters $\theta_1$ and $\alpha_1$. The  horizontal axis shows the different maximal space-time lags included in the pairwise likelihood function from \eqref{PLfull2}. Each vertical dot shows the result for one specific simulation. 
The dotted lines show confidence bands based on the simulation results. 
In addition to the graphical output we calculate the root mean square error (RMSE) and the mean abolute error (MAE) to see how the choice of $(r,p)$ influences the estimation. 

We draw the following conclusions. As already pointed out by Davis and Yau \cite{Davis} and Huser and Davison \cite{Huser}, there might be a loss in efficiency if too many pairs are included in the estimation. This can be explained by the fact that pairs get more and more independent as the space-time lag increases. Adding more and more pairs to the pairwise log-likelihood function can introduce some noise which descreases the efficiency.  
This is evident in Figure \ref{Sim2} for the temporal parameter $\alpha_2$, where the estimates vary more around the mean as more pairs are included in the estimation. 

An interesting observation for our model is that using a maximal spatial lag of $\bs{0}$ or a maximal temporal lag $0$, respectively, leads to very good results. For the spatial parameters, the space-time lags which lead to the lowest RMSE and MAE are $(2,0)$ for $\theta_1$ and $(2,0)$ (RMSE) or $(3,0)$ (MAE) for $\alpha_2$ (see Table \ref{RMSE1}), i.e. we use all pairs within a spatial distance of $2$ or $3$ at the same time point. Basically, this suggests that we could also estimate the spatial parameters based on each individual random field for fixed time points and then take the mean over all estimates in time. The same holds for the time parameters $\theta_2$ and $\alpha_2$, where the best results in the sense of the lowest RMSE and MAE are obtained for the space-time lags $(0,3)$, i.e. if we use all pairwise densities corresponding to the space-time pairs $(\bs{s},t_1)$ and $(\bs{s},t_2)$, where $|t_2-t_1|\leq 3$ (see Table \ref{RMSE2}). 
The reason for this observation is that the parameters of the underlying space-time correlation function get ``separated'' in the extremal setting in the sense that for example a spatial lag equal to zero does not affect the temporal parameters $\theta_1$ and $\alpha_1$ and vice versa.

\begin{table}[ht]
\centering
\begin{tabular}{c|c|c|c|c|c|c|c|c|c}
$\theta_1 $ & (1,0) & (1,1) & (1,2) & (1,3) & (1,4) & (1,5) & (2,0) & (2,1) & (2,2)  \\ 
\hline
RMSE &0.0123 &0.0118 &0.0121 &0.0122 &0.0123 &0.0124 &0.0103 &0.0104 &0.0105 \\
MAE & 0.0105 &0.0090 &0.0092 &0.0093 &0.0094 &0.0095 &0.0080 &0.0081 &0.0081 \\
\hline 
& (2,3) & (2,4) & (2,5) & (3,0) & (3,1) & (3,2) & (3,3) & (3,4) & (3,5)  \\
\hline
RMSE &  0.0104 &0.0104 &0.0104 &0.0106 &0.0107 &0.0108 &0.0108 &0.0107 &0.0108 \\
MAE & 0.0081 &0.0081 &0.0081 &0.0082 &0.0083 &0.0083 &0.0084 &0.0083 &0.0084 \\
\hline
\hline
\end{tabular}
\begin{tabular}{c|c|c|c|c|c|c|c|c|c}
$\alpha_1 $ & (2,0) & (2,1) & (2,2) & (2,3) & (2,4) & (2,5) & (3,0) & (3,1) & (3,2)  \\ 
\hline
RMSE & 0.1338 &0.1398 &0.1530 &0.1492 &0.1543 &0.1569 &0.1351 &0.1409 &0.1579  \\
MAE & 0.1078 &0.1124 &0.1154 &0.1137 &0.1233 &0.1252 &0.1050 &0.1106 &0.1127 \\
\hline 
& (3,3) & (3,4) & (3,5) & (4,0) & (4,1) & (4,2) & (4,3) & (4,4) & (4,5)  \\
\hline
RMSE &  0.1596 &0.1639 &0.1649 &0.1423 &0.1483 &0.1614 &0.1673 &0.1735 &0.1751 \\
MAE & 0.1228 &0.1291& 0.1297& 0.1120 &0.1176& 0.1114 &0.1276 &0.1372& 0.1385
\end{tabular}
\caption{RMSE and MAE based on 100 simulations for the spatial estimates $\theta_1$ and $\alpha_1$ for different combinations of maximal space-time lags $(r,p)$. }
\label{RMSE1}
\end{table}

\begin{table}[ht]
\centering
\begin{tabular}{c|c|c|c|c|c|c|c|c|c}
$\hat{\theta}_2 $ & (0,1) & (0,2) & (0,3) & (0,4) & (0,5) & (1,1) & (1,2) & (1,3) & (1,4)  \\ 
\hline
RMSE &0.0182 &0.0182 &0.0182 &0.0182 &0.0182 &0.0184 &0.0183 &0.0183 &0.0183 \\
MAE & 0.0171 &0.0171& 0.0171& 0.0171& 0.0171& 0.0173& 0.0172& 0.0171& 0.0171 \\
\hline 
& (1,5) & (2,1) & (2,2) & (2,3) & (2,4) & (2,5) & (3,1) & (3,2) & (3,3)  \\
\hline
RMSE &  0.0183 &0.0187 &0.0186 &0.0185 &0.0185 &0.0185 &0.0188 &0.0188& 0.0186\\
MAE & 0.0171& 0.0175& 0.0174& 0.0173& 0.0174& 0.0173& 0.0176& 0.0176& 0.0174\\
\hline
\hline 
\end{tabular}
\begin{tabular}{c|c|c|c|c|c|c|c|c|c}
$\hat{\alpha}_2 $ & (0,2) & (0,3) & (0,4) & (0,5) & (1,2) & (1,3) & (1,4) & (1,5) & (2,2)  \\ 
\hline
RMSE & 0.1317 &0.1269 &0.1280 &0.1289 &0.1442 &0.1401& 0.1426& 0.1438& 0.1463  \\
MAE & 0.1008& 0.0989 &0.1015 &0.1035 &0.1086& 0.1079 &0.1139 &0.1147& 0.1179 \\
\hline 
& (2,3) & (2,4) & (2,5) & (3,2) & (3,3) & (3,4) & (3,5) & (4,2) & (4,3)  \\
\hline
RMSE & 0.1532& 0.1580& 0.1619 &0.1473&0.1531& 0.1589 &0.1642 &0.1549 &0.1607 \\
MAE & 0.1242& 0.1275& 0.1294& 0.1169& 0.1223 &0.1273& 0.1317& 0.1233& 0.1284
\end{tabular}
\caption{RMSE and MAE based on 100 simulations for the spatial estimates $\theta_2$ and $\alpha_2$ for different combinations of maximal space-time lags $(r,p)$. }
\label{RMSE2}
\end{table}

\begin{figure}[htp]
\centering
\includegraphics[width=14.5cm, totalheight=8cm]{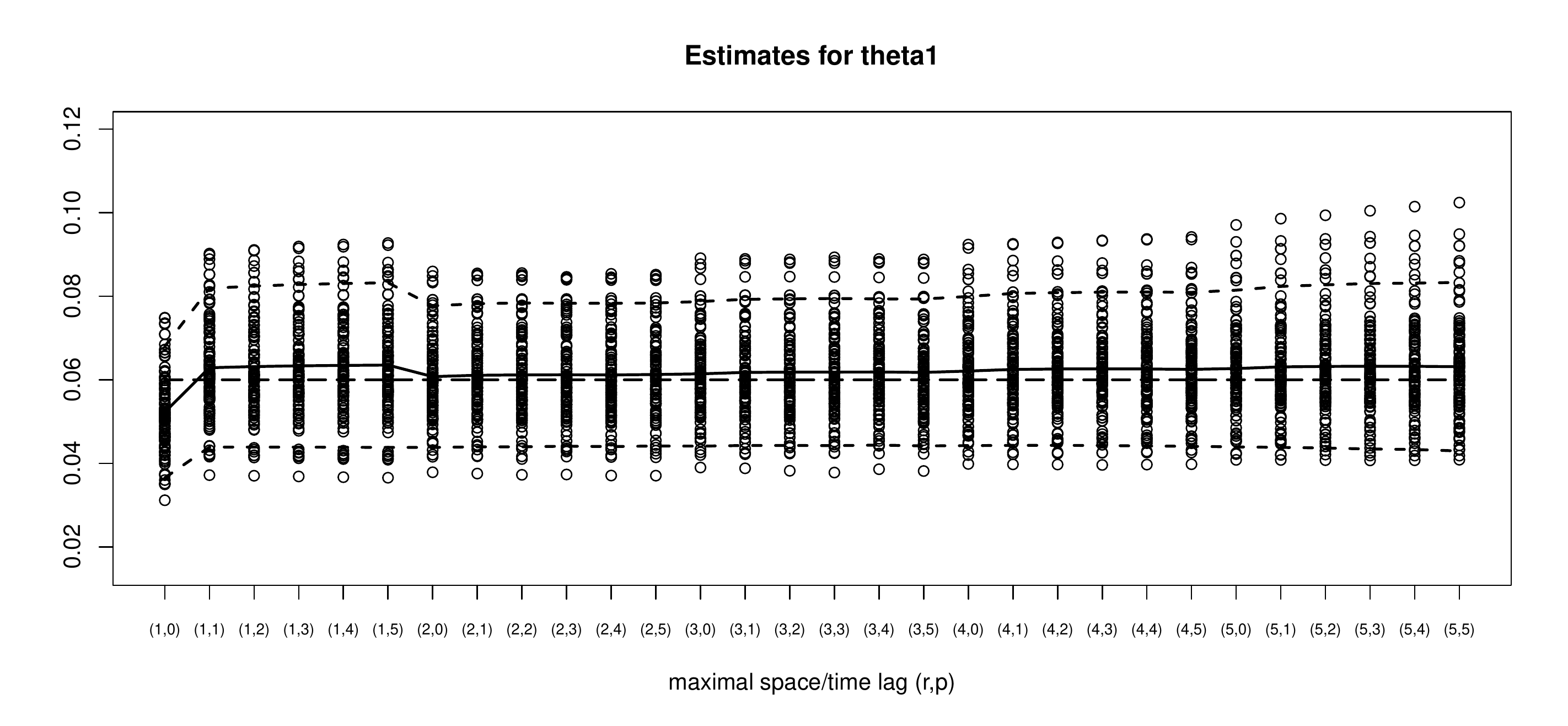}
\includegraphics[width=14.5cm, totalheight=8cm]{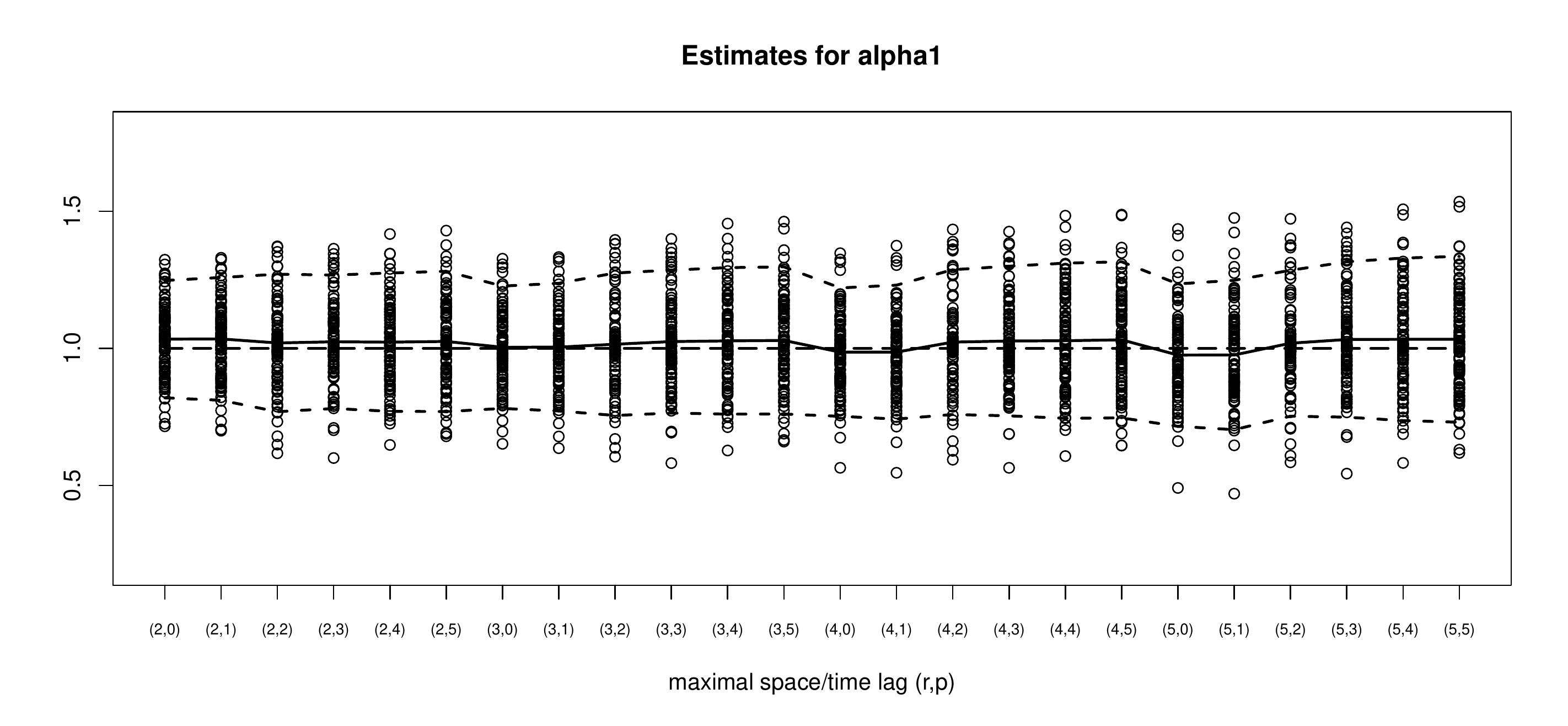}
\caption{Estimates for $\theta_1$ and $\alpha_1$ (spatial parameters) as a function of maximal space-time lags $(r,p)$. Each dot represents the estimate for one of the 100 simulations. The solid line is the mean over all the estimates for each fixed combination of $r$ and $p$ and the dotted lines are simulation-based 95\% pointwise confidence bands. The middle long dashed line represents the true value.}
\label{Sim1}
\end{figure}

\begin{figure}[htp]
\centering
\includegraphics[width=14.5cm, totalheight=8cm]{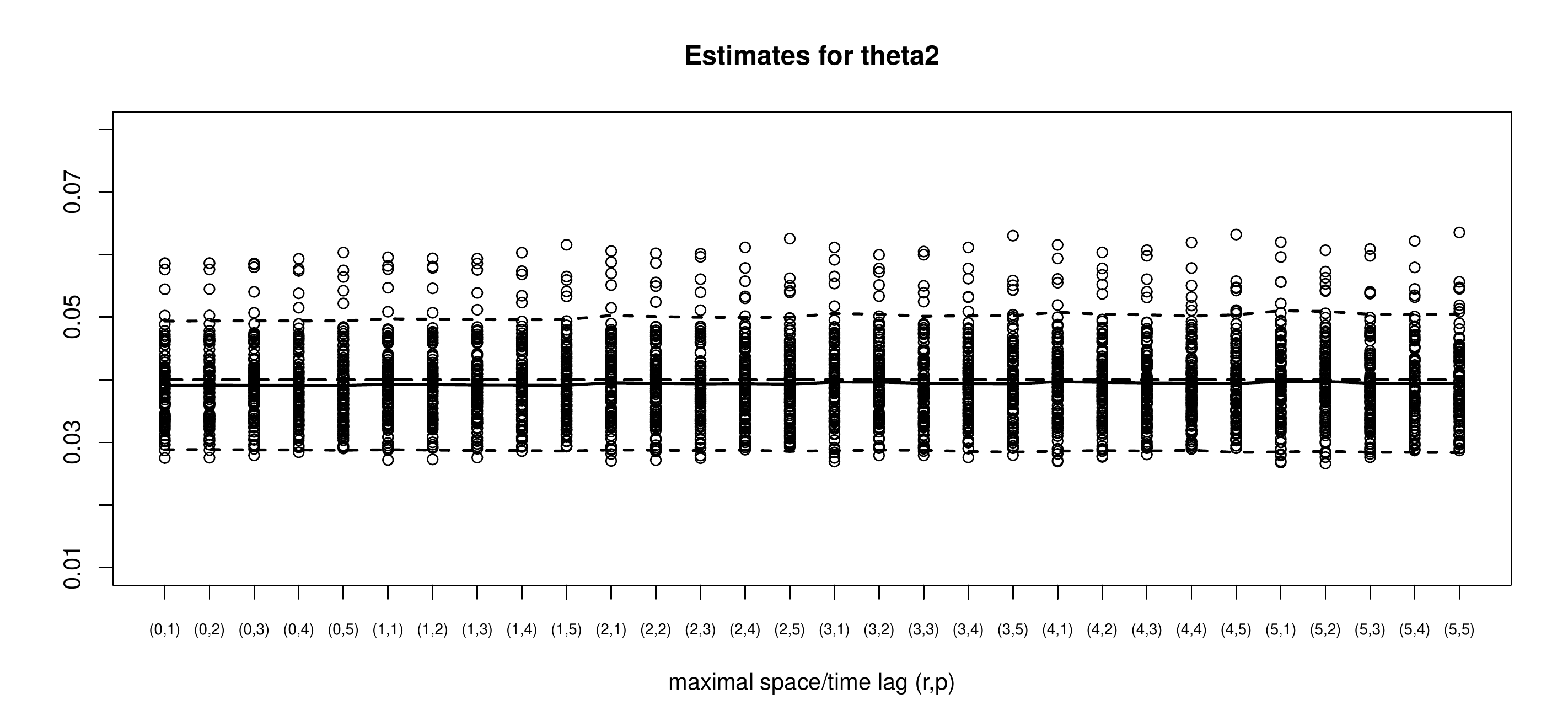}
\includegraphics[width=14.5cm, totalheight=8cm]{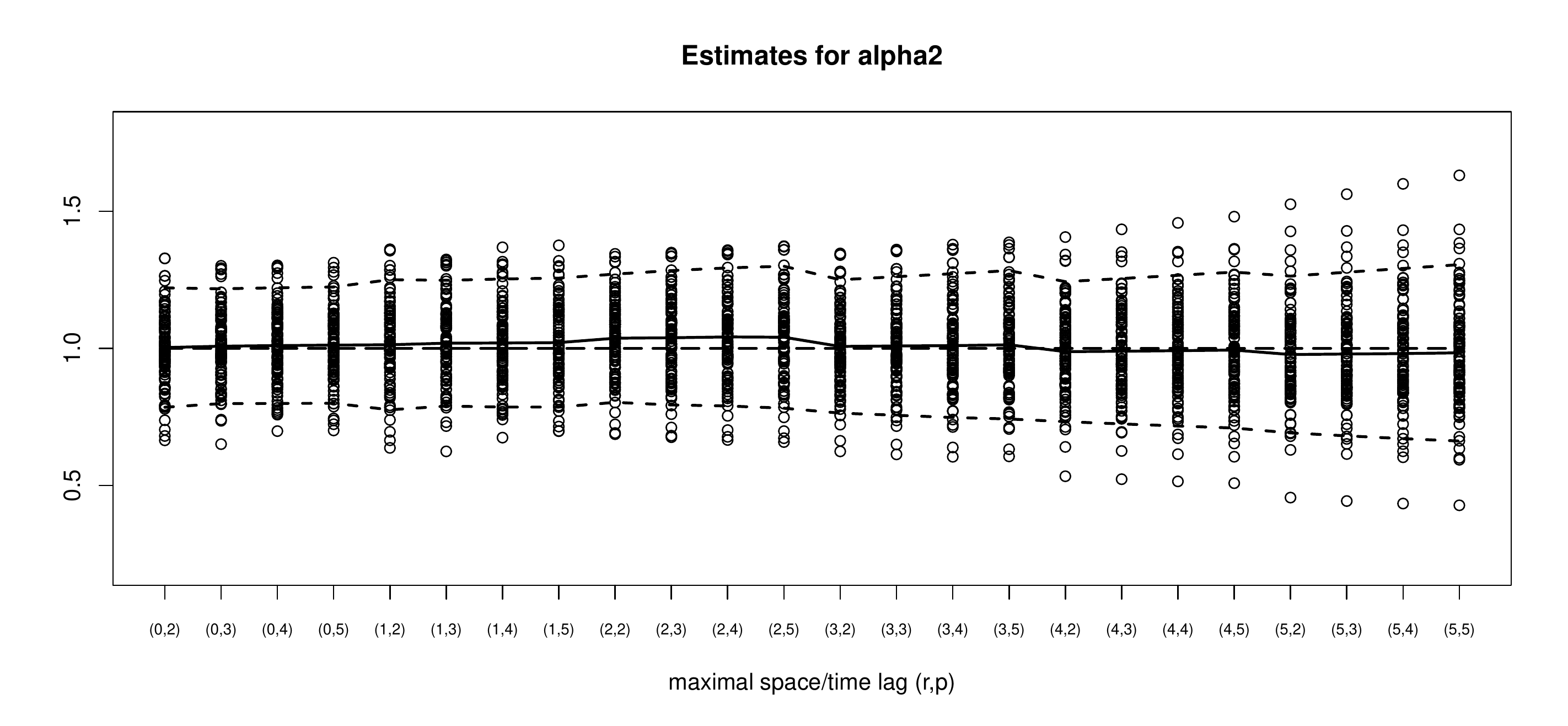}
\caption{Estimates for $\theta_2$ and $\alpha_2$ (temporal parameters) as a function of maximal space-time lags $(r,p)$. Each red dot represents the estimate for one of the 100 simulations. The solid line is the mean over all the estimates for each fixed combination of $r$ and $p$ and the dotted lines are simulation-based 95\% pointwise confidence bands. The middle long dashed represents the true value.}
\label{Sim2}
\end{figure}

\noindent
\textbf{Acknowledgments}\\
All authors gratefully acknowledge the support by the TUM Institute for Advanced Study (TUM-IAS).
The third author additionally likes to thank the International Graduate School of Science and Engineering (IGSSE) of the Technische Universit\"at M\"unchen for their support.

\bibliographystyle{plain}
\bibliography{bibtex_spacetime}

\begin{thebibliography}{10}

\bibitem{Besag}
J.E. Besag.
\newblock Spatial interaction and the statistical analysis of lattice systems.
\newblock {\em Journal of the Royal Statistical Society B}, 34(2):192--236,
  1974.

\bibitem{Bolthausen}
E.~Bolthausen.
\newblock On the central limit theorem for stationary mixing random fields.
\newblock {\em The Annals of Probability}, 10(4):1047--1050, 1982.

\bibitem{Brown}
B.M. Brown and S.I. Resnick.
\newblock Extreme values of independent stochastic processes.
\newblock {\em Journal of Applied Probability}, 14(4):732--739, 1977.

\bibitem{Cox}
D.R. Cox and N.~Reid.
\newblock Miscellanea: A note on pseudolikelihood constructed from marginal
  densities.
\newblock {\em Biometrika}, 91(3):729 -- 737, 2004.

\bibitem{Steinkohl}
R.A. Davis, C.~Kl\"uppelberg, and C.~Steinkohl.
\newblock Max-stable processes for extremes of processes observed in space and
  time.
\newblock {\em Preprint}, 2011.
\newblock arXiv:1107.4464v1 [stat.ME].

\bibitem{Davis}
R.A. Davis and T.~Mikosch.
\newblock Extreme value theory for space-time processes with heavy-tailed
  distributions.
\newblock {\em Stochastic Processes and their Applications}, 118:560--584,
  2008.

\bibitem{Davis5}
R.A. Davis and C.Y. Yau.
\newblock Comments on pairwise likelihood in time series models.
\newblock {\em Statistica Sinica}, 21, 2011.

\bibitem{deHaan}
L.~de~Haan.
\newblock A spectral representation for max-stable processes.
\newblock {\em The Annals of Probability}, 12(4):1194 -- 1204, 1984.

\bibitem{Dombry}
C.~Dombry and F.~Eyi-Minko.
\newblock Strong mixing properties of max-infinitely divisible random fields.
\newblock 2012.
\newblock arXiv:1201.4645v1.

\bibitem{Gneiting}
T.~Gneiting.
\newblock Nonseparable, stationary covariance functions for space-time data.
\newblock {\em Journal of the American Statistical Association}, 95:590--600,
  2002.

\bibitem{Huser}
R.~Huser and A.~Davison.
\newblock Space-time modelling for extremes.
\newblock {\em Preprint}, 2012.
\newblock arXiv:1201.3245v1 [stat.ME].

\bibitem{Huesler}
J.~H\"usler and R.-D. Reiss.
\newblock Maxima of normal random vectors: between independence and complete
  dependence.
\newblock {\em Statistics and Probability Letters}, 7:283--286, 1989.

\bibitem{Kabluchko1}
Z.~Kabluchko.
\newblock Extremes of space-time {G}aussian processes.
\newblock {\em Stochastic Processes and their Applications}, 119:3962 -- 3980,
  2009.

\bibitem{Schlather2}
Z.~Kabluchko, M.~Schlather, and L.~de~Haan.
\newblock Stationary max-stable fields associated to negative definite
  functions.
\newblock {\em The Annals of Probability}, 37(5):2042 -- 2065, 2009.

\bibitem{Krengel}
U.~Krengel.
\newblock {\em Ergodic Theorems}.
\newblock de Gruyter, Berlin, 1985.

\bibitem{Lindsay}
B.G. Lindsay.
\newblock Composite likelihood methods.
\newblock {\em Contemporary Mathematics}, 80:221--239, 1988.

\bibitem{Nott}
D.~J. Nott and T.~Ryd\'{e}n.
\newblock Pairwise likelihood methods for inference in image models.
\newblock {\em Biometrika}, 86(3):661--676, 1999.

\bibitem{Ribatet}
S.A. Padoan, M.~Ribatet, and S.A. Sisson.
\newblock Likelihood-based inference for max-stable processes.
\newblock {\em Journal of the American Statistical Association (Theory and
  Methods)}, 105(489):263--277, 2009.

\bibitem{Schlather1}
M.~Schlather.
\newblock Models for stationary max-stable random fields.
\newblock {\em Extremes}, 5(1):33--44, 2002.

\bibitem{Stoev2}
S.A. Stoev.
\newblock On the ergodicity and mixing of max-stable processes.
\newblock {\em Stochastic Processes and their Applications}, 18(9):1679--1705,
  2008.

\bibitem{Stoev1}
S.A. Stoev and M.S. Taqqu.
\newblock Extremal stochastic integrals: a parallel between max-stable
  processes and $\alpha$-stable processes.
\newblock {\em Extremes}, 8:237--266, 2005.

\bibitem{Straumann1}
D.~Straumann.
\newblock {\em Estimation in Conditionally Heteroscedastic Time Series Models}.
\newblock Lecture Notes in Statistics, Springer, Berlin, 2004.

\bibitem{Varin}
C.~Varin.
\newblock On composite marginal likelihoods.
\newblock {\em Advances in Statistical Analysis}, 92:1--28, 2007.

\bibitem{Varin2}
C.~Varin and P.~Vidoni.
\newblock Pairwise likelihood inference and model selection.
\newblock {\em Biometrika}, 92(3):519--528, 2005.

\bibitem{Wald}
A.~Wald.
\newblock Note on the consistency of the maximum likelihood estimate.
\newblock {\em The Annals of Mathematical Statistics}, 29:595 -- 601, 1946.

\bibitem{Wang}
Y.~Wang, P.~Roy, and S.A. Stoev.
\newblock Ergodic properties of sum- and max-stable stationary random fields
  via null and positive group actions.
\newblock {\em Technical Report}, 2009.

\bibitem{Wang2}
Y.~Wang and S.A. Stoev.
\newblock On the structure and representations of max-stable processes.
\newblock {\em Advances in Applied Probability}, 42(3):855--877, 2010.

\end{thebibliography}

\begin{appendix}
\section{Proof of Lemma \ref{finiteder}}\label{appendix}
In the following, we use the same abbreviations as in Lemma \ref{bivdenlemma}.
The gradient of the bivariate log-density with respect to the parameter vector $\bs{\psi}$ is given by 
$$\nabla_{\bs{\psi}}\log f(x_1,x_2) = \frac{\partial \log f(x_1,x_2)}{\partial \delta}\nabla_{\bs{\psi}}\delta.$$
Assume in the following that all parameters $\theta_1,\alpha_1,\theta_2$ and $\alpha_2$ are identifiable.  
Since all partial derivatives 
$$\frac{\partial \delta}{\partial \theta_1} = \|\bs{h}\|^{\alpha_1}, \  \frac{\partial \delta}{\partial \theta_2} =|u|^{\alpha_2}, \   \frac{\partial \delta}{\partial \alpha_1} \theta_1\alpha_1\|\bs{h}\|^{\alpha_1-1}, \  \frac{\partial \delta}{\partial \alpha_2} = \theta_2\alpha_2|u|^{\alpha_2-1}, $$
as well as all second order partial derivatives 
can be bounded from below and above for $0<\min\left\{\|\bs{h}\|,|u|\right\},\max\left\{\|\bs{h}\|,|u|\right\}<\infty$ using assumption \eqref{parspace} and, independently of the parameters $\theta_1$, $\theta_2$, $\alpha_1$ and $\alpha_2$, it suffices to show that 
$$\mathbb{E}_{\bs{\psi}^*}\left[\left|\frac{\partial \log f_{\bs{\psi}}(\eta(\bs{s}_1,t_1),\eta(\bs{s}_2,t_2))}{\partial \delta}\right|^{3}\right] < \infty$$
and 
$$\mathbb{E}_{\bs{\psi}^*}\left[\sup\limits_{\bs{\psi}\in\bs{\Psi}}\left|\frac{\partial^2\log f_{\bs{\psi}}(\eta(\bs{s}_1,t_1),\eta(\bs{s}_2,t_2))}{\partial \delta}\right|\right]< \infty.$$
Since $\delta$ can be bounded away from zero using assumption \eqref{parspace}, we can treat $\delta$ as a constant. For simplification we drop the argument in the following equalities. 
Define 
$$A_1 = \frac{\partial V}{\partial x_1}, \ A_2 = \frac{\partial V}{\partial x_2}, \ \text{ and }  A_3 = \frac{\partial^2 V}{\partial x_1 x_2}.$$
The partial derivative of the bivariate log-density with respect to $\delta$ has the following form 
$$\frac{\partial \log f_{\bs{\psi}}}{\partial \delta} = -\frac{\partial V}{\partial \delta} + (A_1A_2-A_3)^{-1}\left(\frac{\partial A_1}{\partial \delta} A_2 + A_1\frac{\partial A_2}{\partial \delta} - \frac{\partial A_3}{\partial \delta}\right).$$
We identify stepwise the ``critical'' terms, where ``critical'' means higher order terms of functions of $x_1$ and $x_2$. 
To give an idea on how to handle the components in the derivatives, we describe one such step. 
Note that $(A_1A_2-A_3)^{-1}$ can be written as 
$$(A_1A_2-A_3)^{-1} = \frac{x_1x_2}{g_1\left(\frac{1}{x_1},\frac{1}{x_2},\frac{1}{x_1x_2},\frac{1}{x_1^2},\frac{1}{x_2^2}\right)},$$
where $g_1$ describes the sum of the components together with additional multiplicative factors. 
By using 
$$\frac{\partial \Phi(q_{\bs{\psi}}^{(1)})}{\partial \delta} = \frac{q_{\bs{\psi}}^{(1)}}{2\delta}\varphi(q_{\bs{\psi}}^{(1)}) \quad \text{ and } \quad  \frac{\partial \varphi(q_{\bs{\psi}}^{(1)})}{\partial \delta} = -\frac{(q_{\bs{\psi}}^{(1)})^2}{2\delta}\varphi(q_{\bs{\psi}}^{(1)}),$$
where $q_{\bs{\psi}}^{(1)} = \log(x_2/x_1)/(2\sqrt{\delta}) + \sqrt{\delta}$, we have 
$$\frac{\partial A_1}{\partial \delta} A_2 = g_2\left(\frac{1}{x_1^2,x_2^2},\frac{q_{\bs{\psi}}^{(1)}}{x_1^2x_2^2},\frac{(q_{\bs{\psi}}^{(1)})^2}{x_1^2x_2^2},\frac{1}{x_1^3x_2}, \frac{q_{\bs{\psi}}^{(1)}}{x_1^3x_2}, \frac{(q_{\bs{\psi}}^{(1)})^2}{x_1^3x_2},\frac{1}{x_1x_2^3},\frac{(q_{\bs{\psi}}^{(1)})^2}{x_1x_2^3}\right),$$
where $g_2$ is a linear function of the components. 
By combining the two representations above, we obtain that all terms in \\ $(A_1A_2-A_3)^{-1}(\partial A_1/\partial \delta) A_2$ are of the form
\begin{equation}
\frac{|\log x_1|^{k_1}|\log x_2|^{k_2}}{x_1^{k_3}x_2^{k_4}}, \ \ k_1,k_2,k_3,k_4 \geq 0. 
\label{critical}
\end{equation}
The second derivative of the bivariate log-density with respect to $\delta$ is given by 
\begin{align*}
\frac{\partial^2 \log f_{\bs{\psi}}}{(\partial \delta)^2} =& -\frac{\partial^2 V}{(\partial \delta)^2} - (A_1A_2-A_3)^{-2} \left(\frac{\partial A_1}{\partial \delta} A_2 + A_1\frac{\partial A_2}{\partial \delta} - \frac{\partial A_3}{\partial \delta}\right)^2 \\
&{} + (A_1A_2-A_3)^{-1}\left(\frac{\partial^2A_1}{(\partial \delta)^2}A_2 + 2\frac{\partial A_1}{\partial \delta}\frac{\partial A_2 }{\partial \delta} + A_1\frac{\partial^2A_2}{(\partial \delta)^2} -\frac{\partial^2 A_3}{(\partial \delta)^2}\right)
\end{align*}
Stepwise calculation of the single components shows that all terms are also of form \eqref{critical}. This implies that for both statements it suffices to show that for all $k_1,k_2,k_3,k_4\geq 0$
$$\mathbb{E}\left[\frac{(\log\eta(\bs{s},t))^{k_1}(\log\eta(\bs{s},t))^{k_2}}{|\eta(\bs{s},t)|^{k_3}|\eta(\bs{s},t)|^{k_4}}\right] < \infty.$$
Since $\eta(\bs{s},t)$ is standard Fr\'echet $\log(\eta(\bs{s},t))$ is standard Gumbel and $1/\eta(\bs{s},t)$ is standard exponential. Using H\"older's inequality, we obtain 
\begin{align*}
&\mathbb{E}\left[\frac{|\log(\eta(\bs{s},t))|^{k_1}|\log(\eta(\bs{s},t))|^{k_2}}{|\eta(\bs{s},t)|^{k_3}|\eta(\bs{s},t)|^{k_4}}\right] \\
&< \left(\mathbb{E}\left[|\log(\eta(\bs{s},t))|^{4k_1}\right]\mathbb{E}\left[|\log(\eta(\bs{s},t))|^{4k_2}\right]\right)^{1/2}
\left(\mathbb{E}\left[\left|\frac{1}{\eta(\bs{s},t)}\right|^{4k_3}\right]\mathbb{E}\left[\left|\frac{1}{\eta(\bs{s},t)}\right|^{4k_4}\right]\right)^{1/2} <\infty,
\end{align*} 
since all moments of the exponential and the Gumbel distributions are finite. 

\end{appendix}

\end{document}